\newtheoremstyle{claim}
  {\topsep}
  {\topsep}
  {}
  {}
  {\itshape}
  {.}
  {.5em}
  {\thmname{#1}\thmnumber{ #2}\thmnote{ (#3)}}
\theoremstyle{claim}
\newtheorem{constraint}{Constraint}
\newtheorem{definition}{Definition}
\newtheorem{theorem}{Theorem}
\newtheorem{lemma}{Lemma}
\newtheorem{example}{Example}
\definecolor{cmdcolor}{RGB}{34, 153, 84}
\newcommand{\eg}{{\it e.g.}\xspace}
\newcommand{\ie}{{\it i.e.}\xspace}
\newcommand{\blackcircled}[1]{%
\tikz[baseline=(char.base)]{
\node[shape=circle,draw,inner sep=0.5pt,fill=black, text=white] (char) {#1};}}
\begin{document}

\thispagestyle{plain}
\pagestyle{plain}

\newcommand\pname{FRAP}






\title{\pname{}: A Flexible Resource Accessing Protocol for Multiprocessor Real-Time Systems\thanks{This work is supported by the National Natural Science Foundation of China (NSFC) under Grant 62302533. Corresponding author: Wanli Chang,  wanli.chang.rts@gmail.com.}}


\author{
\IEEEauthorblockN{Shuai Zhao\IEEEauthorrefmark{2}, Hanzhi Xu\IEEEauthorrefmark{2}, Nan Chen\IEEEauthorrefmark{3}, Ruoxian Su\IEEEauthorrefmark{2}, Wanli Chang\IEEEauthorrefmark{4}}
\IEEEauthorblockA{
\IEEEauthorrefmark{2}Sun Yat-sen University, China
\IEEEauthorrefmark{3}University of York, UK
\IEEEauthorrefmark{4}Hunan University, China
}
}

\maketitle
\begin{abstract}
Fully-partitioned fixed-priority scheduling (FP-FPS) multiprocessor systems are widely found in real-time applications, where spin-based protocols are often deployed to manage the mutually exclusive access of shared resources.
Unfortunately, existing approaches either enforce rigid spin priority rules for resource accessing or carry significant pessimism in the schedulability analysis, imposing substantial blocking time regardless of task execution urgency or resource over-provisioning.
This paper proposes \pname{}, a spin-based flexible resource accessing protocol for FP-FPS systems.
A task under \pname{} can spin at any priority within a range for accessing a resource, allowing flexible and fine-grained resource control with predictable worst-case behaviour. 
Under flexible spinning, we demonstrate that the existing analysis techniques can lead to incorrect timing bounds and present a novel MCMF (minimum cost maximum flow)-based blocking analysis, providing predictability guarantee for \pname{}.
A spin priority assignment is reported that fully exploits flexible spinning to reduce the blocking time of tasks with high urgency, enhancing the performance of \pname{}. 
Experimental results show that \pname{} outperforms the existing spin-based protocols in schedulability by 15.20\%-32.73\% on average, up to 65.85\%. 
\end{abstract}

\IEEEpeerreviewmaketitle

\section{Introduction} 
\label{sec:intro}

The increasing demand for emerging real-time applications has necessitated the transition from single processor to multiprocessor systems, with a fully-partitioned fixed-priority scheduling (FP-FPS) scheme commonly applied in practice~\cite{davis2011survey,chang2016resource,chang2019java,chang2020development}. 
On such systems, parallel tasks often need to operate on shared objects, \eg, memory blocks, code segments and I/O ports.
However, these shared resources must be accessed mutually exclusively to ensure data integrity, which causes unbounded blocking that compromises system timing predictability.
To address this, multiprocessor resource sharing protocols~\cite{brandenburg2022multiprocessor} are developed that regulate access to shared resources, providing the worst-case blocking guarantee.

Existing multiprocessor resource sharing protocols can be classified into two categories based on the locking primitives: suspension-based and spin-based~\cite{burns2001real}. 
With the suspension-based approach~\cite{rajkumar1988real,brandenburg2013omlp}, a task is switched away by the scheduler if its resource request is not immediately satisfied.
By contrast, tasks under spin locks actively wait (spin) for a resource on their processors, until the resource is granted~\cite{burns2013schedulability}.
As discussed in~\cite{block2007flexible}, the suspension-based approaches can cause frequency context switches with a non-trivial overhead, while the spin-based ones can impose a delay to other tasks on the processor.
However, the spin locks have the advantage of low complexity and are largely applied in industrial systems~\cite{zhao2020priority,burns2013schedulability}, \eg, it is mandated by the AUTOSAR standard~\cite{furst2009autosar}.

Based on spin locks, various resource sharing protocols are developed, \eg, 
the Multiprocessor Stack Resource Protocol (MSRP)~\cite{gai2001minimizing,gai2002stack} and the Preemptable Waiting Locking Protocol (PWLP)~\cite{alfranseder2014efficient}.
Each protocol defines a set of uniform rules that regulate the resource accessing behaviours of all tasks, including the priorities at which tasks should spin and execute with resources. For instance, tasks under MSRP spin non-preemptively for resources, whereas PWLP defines that tasks spin at their base priorities with preemption enabled.

However, in real-world systems, tasks (with different execution urgency) often have various demands on different resources, \eg, resources with diverse critical section lengths.
Unfortunately, existing solutions enforce a uniform rule for the spin priorities of all tasks, imposing a spin delay on tasks regardless of their urgency, leading to deadline misses that jeopardise system schedulability.
As shown in~\cite{zhao2017new}, MSRP is not favourable for long resources as the resource accessing task can impose a long blocking on others.
By contrast, a task under PWLP can incur a significant delay if it is frequently preempted when spinning for a resource.
Hence, when facing various resource accessing scenarios in a system, the effectiveness of existing spin-based protocols can be significantly compromised as spin priorities are assigned by a single rule that considers neither the requesting task nor the resource.

Alternatively,~\cite{afshar2014flexible,afshar2017optimal} presents a resource sharing solution that employs both spin and suspension approaches to provide flexible resource control. However, the suspension approach causes multiple priority inversions to tasks, imposing severe blocking that endangers their timing requirements~\cite{brandenburg2022multiprocessor}. More importantly, the hybrid of both types of locks significantly complicated the corresponding schedulability analysis, leading to overly pessimistic analytical bounds that undermine the system performance~\cite{block2007flexible,afshar2017optimal}. 

\textbf{Contributions:} This paper presents a novel multiprocessor Flexible Resource Accessing Protocol (\pname{}) for FP-FPS systems, which serves resources in a first-in-first-out (FIFO) order managed solely by spin locks. 
In addition, unlike existing spin-based protocols, \pname{} allows tasks to spin at any priority between the base priority and the highest priority (\ie, effectively non-preemptive) for accessing a resource. This enables flexible and fine-grained resource sharing control for different resource accessing scenarios. 
With flexible spin priority enabled, we demonstrate an analysis problem of \pname{} which is not seen in existing protocols and 
construct a new MCMF (minimum cost maximum flow)-based blocking analysis~\cite{ahuja1995network}, providing timing predictability for \pname{}.
Then, a spin priority assignment is proposed that computes the spin priorities of each task when accessing every resource it requests, which enhances the performance of \pname{} by reducing the blocking of tasks with high urgency. 
The experimental results show \pname{} outperforms existing spin-based protocols by $15.20\%-32.73\%$ (up to $65.85\%$) on average in system schedulability, and significantly outperforms the hybrid locking approach~\cite{afshar2014flexible,afshar2017optimal} by $76.47\%$ on average (up to $1.78$x).
With \pname{} constructed, we overcome the limitations of existing solutions and provide an effective solution for managing resources under a wide range of accessing scenarios.

\textbf{Organisation:} Sec. \ref{sec:model} describes the system model. Sec. \ref{sec:related} provides the related work and the motivation. Sec. \ref{sec:protocol} describes the working mechanism of \pname{}. Then, Sec. \ref{sec:RTA} presents the proposed blocking analysis for \pname{} and Sec.~\ref{sec:assignment} determines the spin priorities of tasks. Finally, Sec.~\ref{sec:evaluation} presents the experimental results and Sec. \ref{sec:conclusion} draws the conclusion.
 
\section{System Model} 
\label{sec:model}


We focus on a multiprocessor system that contains a set of symmetric processors $\Lambda$ and a set of sporadic tasks $\Gamma$. 
A processor is denoted by $\lambda_m \in \Lambda$, and $\tau_i \in \Gamma$ indicates a task.
Tasks are statically allocated and prioritised on each processor by mapping and priority ordering algorithms before execution, and are scheduled by the FP-FPS scheme at runtime. 
Notation $\Gamma_{m}$ denotes the set of tasks on $\lambda_m$.

A task $\tau_i$ is defined as $\tau_i=\{C_i, T_i, D_i, P_i, A_i\}$, in which $C_i$ is the pure Worst-Case Execution Time (WCET) without accessing any resource, $T_i$ is the period, $D_i$ is the constrained deadline, $P_i$ is the base priority without accessing a resource, and $A_i$ is the allocation of $\tau_i$. Each task has a unique priority, where a higher numeric value indicates a higher priority. Function $\text{lhp}(i)=\{\tau_h ~|~ A_h = A_i \wedge P_h > P_i\}$ and $\text{llp}(i)=\{\tau_l ~|~ A_l = A_i ~\wedge~ P_l < P_i\}$ denotes the set of local high-priority and low-priority tasks of $\tau_i$, respectively. Notation $\tau_j$ is a remote task of $\tau_i$, \ie, $\tau_j \in \Gamma_{m}, \lambda_m \neq A_i$.


In addition, the system contains a set of mutually-exclusively shared resources, denoted as $\mathbb{R}$. Accesses to a resource $r^k \in \mathbb{R}$ are protected by a spin lock, \ie, tasks busy-wait (spin) if $r^k$ is not available. A resource $r^k$ has a critical section length of $c^k$, indicating the worst-case computation time of tasks on $r^k$. The number of times that $\tau_i$ can request $r^k$ in one release is denoted as $N_i^k$. 
Function $F(\cdot)$ returns the resources requested by the given tasks, \eg, $F(\tau_1) = \{r^1, r^2\}$ indicates $\tau_1$ requires $r^1$ and $r^2$ during execution.
When a task requires a locked resource, it spins with the designated spin priority until the resource is granted. 
The spin priority of $\tau_i$ for accessing $r^k$ is given by $P_i^k$, specifying at which priority that $\tau_i$ should busy wait for $r^k$. 
The spin priorities are computed by Alg.~\ref{alg:assign} in Sec.~\ref{sec:assignment-slack}.
Resources shared across processors are termed \textit{global resources}, whereas ones that are shared within one processor are \textit{local resources}.
We assume a task can hold at most one resource at a time, however, nested resources can be supported using group locks~\cite{block2007flexible,zhao2017new}. 
Notations introduced in the system model are given in Tab.~\ref{tab:model_notation_table}.



\section{Existing Resource Sharing Protocols}
\label{sec:related}


Numerous resource sharing protocols with spin locks applied are available for FP-FPS systems~\cite{zhao2018thesis}. However, as shown in~\cite{wieder2013spin,zhao2017new,zhao2018thesis}, there exists no optimal solution that dominates others, in which each protocol favours systems with certain task and resource characteristics. 
This section describes the working mechanism of the mainstream resource sharing solutions (Sec.~\ref{sec:related-protocol}) and their timing analysis (Sec.~\ref{sec:related—analysis}).


\subsection{Resource Sharing Protocols with Spin Locks}\label{sec:related-protocol}

The majority of spin-based resource sharing protocols in FP-FPS systems serve resources in a FIFO order, each providing a set of rules to manage the global resources~\cite{brandenburg2022multiprocessor}. Local resources are controlled by the Priority Ceiling Protocol (PCP)~\cite{goodenough1988priority} in FP-FPS systems.
A comprehensive review can be found in~\cite{brandenburg2022multiprocessor} and~\cite{zhao2018thesis}. 
Below we detail the working mechanism of the mainstream protocols with spin locks applied.



The Multiprocessor Stack Resource Protocol (MSRP)~\cite{gai2001minimizing,gai2002stack} provides a non-preemptive resource sharing approach with FIFO spin locks. Under MSRP, each resource is associated with a FIFO queue that specifies the order that the resource is served. 
When $\tau_i$ requests a resource $r^k$ that is currently locked by another task, it joins at the end of the FIFO queue and spins non-preemptively. 
Once $\tau_i$ becomes the head of the queue, it locks $r^k$ and remains non-preemptable. After $\tau_i$ releases $r^k$, it leaves the FIFO queue and restores the priority to $P_i$.
MSRP provides strong protection to the resource accessing task, however, it imposes an impact on the execution of the local high-priority tasks (\ie, $\text{lhp}(i)$), jeopardising their timing requirements (see Sec.~\ref{sec:related—analysis}).

The Preemptable Waiting Locking Protocol (PWLP)~\cite{alfranseder2014efficient} defines that tasks spin at their base priorities, but execute with resources non-preemptively. That is, $\tau_i$ can be preempted by $\tau_h \in \text{lhp}(i)$ when spinning for $r^k$. 
This reduces the blocking of tasks in $\text{lhp}(i)$ due to resource requests of $\tau_i$. 
However, a preemption on $\tau_i$'s resource access can block the remote tasks that are waiting for the same resource.
To avoid this delay, PWLP cancels the request of $\tau_i$ when it is preempted and removes $\tau_i$ from the FIFO queue, so that other tasks can access the resource. Once $\tau_i$ is resumed, it re-requests $r^k$ at the end of the queue.
However, with the cancellation mechanism, $\tau_i$ can incur an additional blocking due to the re-requests to $r^k$ if it is preempted when spinning for the resource.


The Multiprocessor resource sharing Protocol (MrsP)~\cite{burns2013schedulability} provides an alternative that $\tau_i$ spins and executes with $r^k$ at the ceiling priority, \ie, the highest priority among local tasks that request $r^k$. If $\tau_i$ is preempted when it is the head of the queue (\ie, eligible to use $r^k$), $\tau_i$ is migrated to a processor with a task spinning for $r^k$ (if it exists), at which $\tau_i$ resumes and executes with $r^k$ using the wasted spinning cycles. By doing so, MrsP provides an opportunity for $\tau_i$ to execute with $r^k$ when being preempted, while mitigating the impact on other tasks. However, as shown in~\cite{zhao2017new}, the cost of migrations is non-trivial, resulting in an additional delay on $\tau_i$ and other spinning tasks, especially when the preemptions are frequent.

Another relevant work is developed in~\cite{afshar2014flexible,afshar2017optimal}, which applies both spin and suspension approaches to manage resources. As described in~\cite{afshar2014flexible}, the suspension is realised by spin locks with a low spinning priority (\eg, zero), where the spinning task gives up the processor as long as another task arrives. For each resource, \cite{afshar2017optimal} provides a priority configuration method that produces an accessing priority for tasks on each processor. 
This allows a certain degree of flexibility in resource management. 
However, when a task is preempted during spinning, it is switched away while remaining in the FIFO queue, which prolongs the blocking of other tasks that request the resource~\cite{alfranseder2014efficient}.
In addition, with suspension, tasks can incur multiple priority inversion blocking with non-trivial overheads~\cite{brandenburg2022multiprocessor}. 
Most importantly, such a hybrid locking solution greatly complicated its analysis as the blocking effects caused by both locking approaches must be bounded. This leads to a high degree of pessimism where a request can be accounted for multiple times~\cite{wieder2013spin}. 
Hence, as demonstrated in Sec.~\ref{sec:evaluation}, the effectiveness of the hybrid approach is significantly undermined due to the above limitations.

In addition, there exist other protocols with spin locks~\cite{takada1997novel,block2007flexible,chen2022msrp}, but require additional implementation support (\eg, SPEPP in~\cite{takada1997novel}) or enforce other forms of locking primitives in the implementation (\eg, FMLP in~\cite{block2007flexible}). These protocols are acknowledged but are not the main focus of this work.

\begin{table}[t]
\vspace{3pt}
  \centering
  \caption{Notations introduced in Sec.~\ref{sec:model} and~\ref{sec:related}.}
  \label{tab:model_notation_table}
  \begin{tabular}{p{.18\columnwidth}p{.72\columnwidth}}
    \hline
    \textbf{Notations} & \textbf{Descriptions} \\
    \hline
    $\Lambda$ & The set of symmetric processors in the system. \\
    $\lambda_m$ & A processor with an index of $m$.\\
    \hline
    $\Gamma$~/~$\Gamma_m$ & The set of tasks in the system / tasks on $\lambda_m$. \\
    $\tau_i$ & A task with an index of $i$. \\
    ${C}_{i}$ & The pure WCET of $\tau_i$. \\
    $T_i$ / $D_i$ & The period / deadline of $\tau_i$. \\
    $P_i$ / $A_i$ & The priority / allocation of $\tau_i$. \\
    $\tau_h$ / $\tau_l$ / $\tau_j$ & A local high-priority task / a local low-priority task / a remote task of $\tau_i$. \\
    $\text{lhp}(i)$ / $\text{llp}(i)$ &  The set of local high-priority / low-priority tasks of $\tau_i$.\\
    \hline
    $\mathbb{R}$ & The set of shared resources in the system. \\
    $r^k$ & A resource with an index of $k$. \\
    ${c}^{k}$ & The critical section length of $r^k$. \\
    $N_i^k$ & The number of requests for $r^k$ by $\tau_i$ in one release. \\
    $P_i^k$ & The spin priority of $\tau_i$ for accessing $r^k$.\\
    $F(\cdot)$ & The set of resources that are required by the given tasks. \\
    \hline
    $R_i$ & The worst-case response time of $\tau_i$.\\
    $I_i$ & The worst-case high-priority interference of $\tau_i$. \\
    $E_i$ / $B_i$ / $W_i$  & Spin delay / arrival blocking / additional blocking of $\tau_i$. \\
    $\zeta_{i}^k$ / $\xi_{i,m}^k$ & The number of requests for $r^k$ during $R_i$ issued by tasks in $\tau_i \cup \text{lhp}(i)$ / tasks in $\Gamma_{m}$. \\
    \hline
  \end{tabular}
  \vspace{-9pt}
\end{table}

\subsection{Timing Bounds of FIFO Spin-based Protocols} \label{sec:related—analysis}

This section describes the analysis for FP-FPS systems with MSRP, PWLP and MrsP~\cite{wieder2013spin,zhao2018thesis}, \ie, the spin-based approaches. The notations introduced by the analysis are summarised in Tab.~\ref{tab:model_notation_table}.
Note, as the analysis of the hybrid approach mainly focuses on the blocking caused by suspension, it is not described and is referred to~\cite{afshar2014flexible,afshar2017optimal}.


\textbf{Blocking effects.} As pointed out by~\cite{wieder2013spin,zhao2018thesis}, $\tau_i$ under spin locks can incur \textit{spin delay} $E_i$, \textit{arrival blocking} $B_i$, and \textit{additional blocking} $W_i$. An example system that illustrates the blocking effects is presented in Fig.~\ref{fig:blocking_examp}, which contains four tasks that request $r^1$ on two processors. The index of tasks represents their priority.
First, $\tau_i$ can incur a spin delay directly if its request to $r^k$ is not immediately satisfied, \eg, $\tau_2$ is blocked by $\tau_4$ from $t=2$ to $t=3$. In addition, it can incur an indirect spin delay from $\tau_h \in \text{lhp}(i)$, in which $\tau_h$ preempts $\tau_i$ but is blocked for requesting a resource, which in turn, delays $\tau_i$ (\eg, $\tau_2$ is indirectly delayed by $\tau_3$ at $t=4$). The spin delay occurs in all spin-based protocols~\cite{zhao2018thesis}. Second, when $\tau_i$ is released, it can incur an arrival blocking from $\tau_l \in \text{llp}(i)$, where $\tau_l$ is spinning or executing with $r^k$ at a priority equal to or higher than $P_i$ (\eg, $\tau_2$ is blocked by $\tau_1$ from $t=0$ to $t=1$). 
Third, with preemptable spin-based protocols, $\tau_i$ can incur additional blocking if it is preempted when accessing a resource. For instance, with PWLP applied, the request of $\tau_2$ is cancelled at $t=3$ as $\tau_2$ is preempted, it then re-requests $r^1$ at $t=6$ but incurs an additional blocking caused by  $\tau_4$.

\begin{figure}[t]
\centering
\hspace{-5pt}\includegraphics[width=.95\columnwidth]{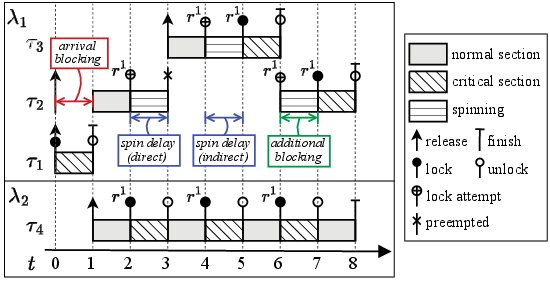}
\caption{A system for illustrating the blocking effects.}
\label{fig:blocking_examp}
\vspace{-13pt}
\end{figure}

\textbf{Timing Bound.} The response time of $\tau_i$ in FP-FPS systems with a spin-based protocol is shown in Eq.~\ref{eq:r}, in which $\overline{C_i} = C_i + \sum_{r^k \in F(\tau_i)} N_i^k \cdot c^k$ is the total execution time of $\tau_i$ (including the execution on resources), $E_i$ is the spin delay, $B_i$ is the arrival blocking, $W_i$ is the additional blocking, and $I_i = \sum_{\tau_h\in{\text{lhp}(i)}} \Big(\left\lceil{\frac{R_{i}}{T_h}}\right\rceil \cdot {\overline{C_h}}\Big)$ is the high-priority interference. 
\begin{equation}\label{eq:r}
\small
R_i  =  \overline{C_i} + E_i + B_i + W_i 
 + I_i 
\end{equation}

As described in~\cite{zhao2017new} and~\cite{wieder2013spin}, $E_i$ under FIFO spin locks is bounded by Eq.~\ref{eq:e}. 
During the release of $\tau_i$, the number of requests for $r^k$ issued by $\tau_i \cup \text{lhp}(i)$ is $\zeta^k_i = N_i^k + \sum_{\tau_h\in{\text{lhp}(i)}} \left \lceil \frac{R_i}{T_h} \right \rceil \cdot N_h^k $, and the number of requests from $\Gamma_{m}$ is computed by $\xi^k_{i,m} = \sum_{\tau_j \in \Gamma_{m}} \left \lceil \frac{R_i + R_j}{T_j} \right \rceil \cdot N_j^k$ with the back-to-back hit taken into account~\cite{wieder2013spin}. 
For a request to $r^k$ issued by $\tau_i$ or $\tau_h \in \text{lhp}(i)$, it can be blocked by at most one request to $r^k$ on a remote processor $\lambda_m$ (if it exists).
Thus, by taking $\min\{\zeta^k_i, \xi^k_{i,m}\}$ on each $\lambda_m$, $E_i$ is computed by the exact number of remote requests that can cause the spin delay of $\tau_i$ in the worst case. This ensures a remote request is accounted for only once in $E_i$~\cite{zhao2018thesis, zhao2020priority}. 
\begin{equation}\label{eq:e}
\small
E_i = \sum_{r^k \in \mathbb{R}} \bigg(\sum_{\lambda_m \neq A_i} \min \{\zeta^k_i,  \xi^k_{i,m}\} \cdot c^k \bigg)
\end{equation}



The bounds on $B_i$ and $W_i$ vary in these protocols due to different resource accessing (spin and execution) priorities.
Here we highlight the major differences between the protocols. Detailed computations can be found in~\cite{zhao2018thesis} and~\cite{wieder2013spin}. 

In MSRP, $B_i$ is caused by a $\tau_l$ that requests either a local resource with a ceiling not lower than $P_i$ (with PCP applied) or a global resource.
Thus, $B_i$ is bounded by such a request that yields the highest delay on $\tau_i$, including potential remote blocking due to the non-preemptive spinning. 
For PWLP, $B_i$ is imposed from the same set of resources as MSRP due to the non-preemptive resource execution, but is reduced to one critical section only as tasks spin at their base priorities. As for MrsP, $B_i =0$ if $P_i$ is higher than the ceiling priority of any $r^k \in F(\text{llp}(i))$; otherwise, the same computation of MSRP applies. As revealed in~\cite{zhao2018thesis}, a general trend is observed that \textit{a resource accessing rule with a higher priority causes an increasing $B_i$}, where tasks in MSRP have the highest $B_i$.

As for $W_i$, tasks in MSRP do not incur any additional blocking (\ie, $W_i=0$) due to the non-preemptive approach. However, for both PWLP and MrsP, the bound on $W_i$ correlates to the number of preemptions (NoP) that $\tau_i$ can incur when it is accessing a resource. As described in~\cite{zhao2018thesis}, $W_i$ follows the general trend that \textit{a lower spin priority leads to an increasing $W_i$}. In particular, $W_i$ can become significant under PWLP if NoP is high~\cite{zhao2018thesis}. In addition, $W_i$ under MrsP is also sensitive to the migration cost and can become non-trivial when $\tau_i$ experiences frequent migrations~\cite{zhao2017new}.









Based on the above, tasks under different FIFO spin-based protocols exhibit various bounds on $B_i$ and $W_i$ due to different resource accessing priorities. 
The non-preemptive approach (\ie, MSRP) eliminates $W_i$ but imposes the largest $B_i$ compared to others, and $B_i$ increases for tasks with a higher priority.
In contrast, protocols with preemptable spinning (\ie, PWLP and MrsP) reduce $B_i$, however, they impose the additional blocking $W_i$ on the preempted spinning tasks. 
As shown in~\cite{wieder2013efficient,zhao2018thesis}, the performance of these protocols is sensitive to the characteristics of resources and the requesting tasks. For example, MSRP is not suitable for long resources, whereas PWLP and MrsP are not favourable if a task incurs frequent preemptions when accessing a resource~\cite{zhao2018thesis}.

\textbf{Motivation.} 
In practice, tasks often have various demands on shared resources. However, existing spin-based solutions either enforce rigid spin priority rules (\eg, non-preemptively and base priority) or allow suspension behaviours with overly pessimistic analysis, leading to significant blocking or resource over-provisioning that compromises the effectiveness of these solutions. 
To address this, we present a novel spin-based protocol named \pname{} with an accurate timing analysis that enables and exploits \textit{flexible spin priority} to improve the timing performance of FP-FPS systems with shared resources.

\section{\pname{}: Working Mechanism and Properties} 
\label{sec:protocol}

This section describes the basic working mechanism of \pname{}. Unlike existing protocols, \pname{} enables flexible spin priority in which a task can spin at any priority within a given range for accessing a resource. With the working mechanism, a set of properties held by \pname{} is presented, demonstrating the predictable worst-case resource accessing behaviour of tasks under \pname{}. 
This provides the foundation of the blocking analysis and the spin priority assignment constructed in Sec~\ref{sec:RTA} and Sec~\ref{sec:assignment}, respectively.



\textbf{Working Mechanism.}
As with the existing spin-based solutions, \pname{} manages global resources using FIFO spin locks, and controls local ones by PCP. 
In addition, the following resource accessing rules are defined for \pname{}, specifying the behaviour of $\tau_i$ for accessing a global resource $r^k$.
\begin{enumerate}
[label=R\arabic*., ref=\arabic*]
\item \label{rule1} If $\tau_i$ requests $r^k$ that is locked by another task, $\tau_i$ enters into the FIFO queue and actively spins for $r^k$ on $A_i$, with a spin priority of $P_i^k$; 
\item \label{rule2} When $\tau_i$ is granted with $r^k$ (\ie, at the head of the queue), it executes non-preemptively until $r^k$ is released, at which $\tau_i$ exits the queue and restores its priority to $P_i$; and
\item \label{rule3} If $\tau_i$ is preempted when spinning, $\tau_i$ cancels the request, exits from the FIFO queue, and is then switched away by the scheduler with its priority restored to $P_i$; once $\tau_i$ is resumed, it re-requests $r^k$ again at the end of the queue and spins with the spinning priority $P_i^k$. 
\end{enumerate}

Rule 1 provides the feature of flexible spin priority, in which each $P_i^k$ in the system can be different.
In addition, we prohibit preemptions during a critical section in Rule 2 to avoid the delay from $\tau_h$ on other tasks in the FIFO queue.
To handle preempted spinning tasks, the request cancellation from PWLP is applied in Rule 3. 
However, migrations (see MrsP in Sec.~\ref{sec:related-protocol}) are not considered to avoid additional demand for the operating system and complicated implementations.



In addition, Constraint~\ref{cons:sp} is applied to specify the range of $P_i^k$. For $\tau_i$, we enforce $P_i \leq P_i^k, \forall r^k \in F(\tau_i)$ so that $\tau_i$ cannot be preempted by any $\tau_l$ when it is 
spinning with any $P_i^k$. 
Notation $\widehat{P}$ denotes priority that allows $\tau_i$ to execute non-preemptively, \eg, the highest priority of all tasks. 
This indicates the application of \pname{} does not require additional priority levels beyond the base priorities.

\begin{constraint} \label{cons:sp}
$ \forall \tau_i \in \Gamma, \forall r^k \in F(\tau_i): P_i \leq P_i^k \leq \widehat{P}$.
\end{constraint}

With flexible spinning, each task can be assigned an appropriate spin priority within the range for accessing each resource it requests, so that the resulting blocking of tasks can be managed based on their execution urgency. 
For $\tau_i$ with a tight deadline, we can assign $P_i^k = \widehat{P}, \forall r^k \in F(\tau_i)$ and $P^k_l < P_i, \forall r^k \in F(\tau_l)$ so that $W_i = 0$ with a minimised $B_i$.

\vspace{5pt}
\textbf{Properties of \pname{}.} 
With Rules 1-3 and Constraint~\ref{cons:sp} applied, the following properties (represented as lemmas) hold for \pname{}. These properties justify the feasibility of bounding the worst-case response time of tasks ruled by \pname{}.
\begin{lemma} \label{lem:interference}
\pname{} is compliant with the interference bound of FP-FPS system, \ie, $I_i=\sum_{\tau_h\in{\text{lhp}(i)}} \left\lceil{\frac{R_{i}}{T_h}}\right\rceil \cdot {\overline{C_h}}$.
\end{lemma}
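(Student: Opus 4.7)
The plan is to show that \pname{}'s rules do not introduce any mechanism that would either (a) cause a high-priority task $\tau_h \in \text{lhp}(i)$ to release more often than at rate $1/T_h$, nor (b) cause each release of $\tau_h$ to consume more than $\overline{C_h}$ of CPU execution on $A_i$. Given these two facts, the standard sporadic-task argument immediately yields $I_i = \sum_{\tau_h \in \text{lhp}(i)} \lceil R_i/T_h \rceil \cdot \overline{C_h}$.

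For the release rate, I would simply observe that \pname{} is a runtime locking protocol acting on an FP-FPS scheduler; it does not alter the activation pattern of tasks. Under the sporadic model, $\tau_h$ has at most $\lceil R_i/T_h \rceil$ releases in any interval of length $R_i$, so this part is inherited directly from the underlying task model.

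The substantive step is bounding the execution demand of one release of $\tau_h$ by $\overline{C_h} = C_h + \sum_{r^k \in F(\tau_h)} N_h^k \cdot c^k$. I would go through Rules~\ref{rule1}--\ref{rule3} and argue: (i) by Rule~\ref{rule2}, once $\tau_h$ is granted $r^k$, it executes the critical section non-preemptively for exactly $c^k$, contributing at most $N_h^k \cdot c^k$ per release across all resources; (ii) outside critical sections, $\tau_h$ executes its pure computation, totalling $C_h$; (iii) by Rule~\ref{rule1}, any time $\tau_h$ spends in the FIFO queue is spent \emph{spinning}, i.e.\ waiting for a resource rather than executing its own workload, and this delay is accounted for in $E_i$ rather than $I_i$; (iv) by Rule~\ref{rule3}, a cancelled spin request does not add to $\tau_h$'s execution demand either, since on resumption $\tau_h$ simply re-enters the queue and spins again, and this re-spinning is likewise charged to $E_i$.

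The main obstacle I anticipate is the careful bookkeeping between $I_i$ and $E_i$: because Constraint~\ref{cons:sp} allows $\tau_h$ to spin at a raised priority $P_h^k \geq P_h > P_i$, the intervals during which $\tau_h$ preempts $\tau_i$ include both true execution and pure busy-waiting, and one must argue that the busy-waiting portion is counted exactly once, in $E_i$. Once that separation is made explicit, no \pname{}-specific mechanism (flexible spin priority, non-preemptive critical sections, or request cancellation) can inflate the per-release execution demand of $\tau_h$ beyond $\overline{C_h}$, so the standard interference expression carries over verbatim and the lemma follows.
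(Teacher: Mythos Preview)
Your argument is correct but takes a different route from the paper. The paper's proof is a two-line appeal to Constraint~\ref{cons:sp}: because $P_l < P_i \leq P_i^k$ for every $\tau_l \in \text{llp}(i)$ and every $r^k \in F(\tau_i)$, a spinning $\tau_i$ can never be preempted by a newly released low-priority task, so the FP-FPS preemption ordering is preserved and the standard interference term carries over unchanged. In other words, the paper argues from the \emph{lower} bound on spin priorities and looks at $\tau_i$'s side; you argue from the \emph{upper} bound on each $\tau_h$'s per-release demand and the bookkeeping that pushes spin time into $E_i$. Your decomposition is more explicit about why $\overline{C_h}$ (and not something larger) is the right multiplier, and about the $I_i$/$E_i$ separation, which is genuinely useful context for the later analysis. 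The paper's version, by contrast, isolates the one protocol-specific ingredient (Constraint~\ref{cons:sp}) and leaves the rest implicit. Neither argument is wrong, but note that your proof never invokes the $P_i^k \geq P_i$ half of Constraint~\ref{cons:sp}, which is precisely what the paper singles out as the reason the lemma holds.
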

\begin{proof}
This follows directly from Constraint~\ref{cons:sp}. For $\tau_i$ and $\tau_l \in \text{llp}(i)$, it is guaranteed that $P_l < P_i \leq P_i^k, \forall r^k \in F(\tau_i)$. Thus, $\tau_i$ cannot be preempted by any newly-released $\tau_l$ (with a priority of $P_l$) under an FP-FPS system, hence, the lemma.
\end{proof}

\begin{lemma} \label{lem:onerequest}
At most one task per processor can request a resource at a time instant.
\end{lemma}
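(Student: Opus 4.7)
The plan is to argue by a direct case analysis on the state of the (at most one) executing task on the processor in question, leveraging Rules 2 and 3 together with the uniprocessor property that only one task runs at any instant. An active resource request on $\lambda_m$ corresponds to a task that is either (i) spinning for a resource in the FIFO queue, or (ii) holding a resource and executing its critical section; I would show that state (i) or (ii) on $\lambda_m$ is occupied by at most one task at a time.

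First, I would observe that a task must be executing in order to issue a new request (issuing a request requires code execution, and spinning itself consumes the processor). Hence a freshly-issued request on $\lambda_m$ must come from the currently-running task. Second, for the invariant to be broken there would have to exist a previously-requesting task on $\lambda_m$ that has been set aside but is still considered to be requesting. I would rule this out by splitting on its state at the moment it lost the CPU: if it was holding the resource, Rule~\ref{rule2} forbids preemption during the critical section, so it cannot have been set aside while still holding; if it was spinning, Rule~\ref{rule3} mandates that upon preemption it cancels the request and leaves the FIFO queue, so once it is switched away it is no longer in a requesting state. Either way, no task other than the currently-executing one on $\lambda_m$ can have an outstanding request.

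Combining these two observations yields the lemma: at any instant, the only candidate requester on $\lambda_m$ is the unique currently-running task, and thus at most one task per processor can request a resource at a time.

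The main obstacle, such as it is, lies not in calculation but in being careful about what ``requesting'' means (I would define it explicitly as being in the FIFO queue, either spinning or holding the lock) so that Rules~\ref{rule2} and~\ref{rule3} slot cleanly into the case analysis. Constraint~\ref{cons:sp} is not strictly needed here since preemption-by-a-lower-priority task is already ruled out by FP-FPS scheduling regardless of $P_i^k$; the proof rests entirely on Rules~\ref{rule2} and~\ref{rule3} plus the single-CPU execution property.
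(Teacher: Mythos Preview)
Your proposal is correct and follows essentially the same approach as the paper: the paper's proof also invokes Rule~\ref{rule2} (no preemption while holding a resource) and Rule~\ref{rule3} (cancellation on preemption during spinning) to conclude that no non-running task can have an outstanding request. Your version is more explicit about the case split and the uniprocessor ``only one task runs at a time'' premise, but the underlying argument is identical.
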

\begin{proof}
This is ensured by Rules 2 and 3. With Rule 2, $\tau_i$ cannot be preempted when executing with a resource. In addition, Rule 3 guarantees that $\tau_i$ always cancels its request if it is preempted during spinning. Thus, the lemma holds.
\end{proof}

\begin{lemma} \label{lem:fb}
Upon $\tau_i$'s arrival, it can be blocked at most once due to resource requests issued by $\tau_l \in \text{llp}(i)$.
\end{lemma}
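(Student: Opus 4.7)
The plan is to combine Lemma~\ref{lem:onerequest} with the FP-FPS scheduling rule on processor $A_i$. Arrival blocking from $\text{llp}(i)$ can only occur because some local low-priority task $\tau_l$ is, at the instant $\tau_i$ is released, already engaged in a resource request (either spinning at some $P_l^k \geq P_i$, or executing non-preemptively inside a critical section by Rule~\ref{rule2}) that holds the processor at a priority not below $P_i$. So the first step is to identify this set of ``blocking-capable'' $\tau_l$ at the release instant of $\tau_i$.

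Next, I would invoke Lemma~\ref{lem:onerequest}: on $A_i$, at most one task can be requesting a resource at any instant. Hence at the release time of $\tau_i$, there is at most one $\tau_l \in \text{llp}(i)$ that can be causing arrival blocking. Let $\tau_l^*$ denote this unique task (if it exists); otherwise the lemma is trivial.

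Then I would argue that no second $\tau_l$ can contribute to arrival blocking after $\tau_l^*$ finishes its outstanding request. By Rule~\ref{rule2}, once $\tau_l^*$ releases the resource its priority is restored to $P_{l^*} < P_i$; similarly by Rule~\ref{rule3}, if $\tau_l^*$ was preempted during spinning its priority is already back to $P_{l^*}$. At that moment $\tau_i$ is ready and, since any other $\tau_l' \in \text{llp}(i)$ has $P_{l'} < P_i$, the FP-FPS scheduler must dispatch $\tau_i$ (or some $\tau_h \in \text{lhp}(i)$, which only contributes to $I_i$, not to $B_i$). No $\tau_l'$ can therefore acquire the CPU to initiate a fresh resource request before $\tau_i$ completes, so no further arrival blocking from $\text{llp}(i)$ is possible.

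The only subtle step is the third one: I must rule out the possibility that some $\tau_l'$ issues a new request \emph{while $\tau_l^*$ is still blocking} and then continues to block $\tau_i$ after $\tau_l^*$ releases. This is where Lemma~\ref{lem:onerequest} is used a second time: during $\tau_l^*$'s critical section or spin, no other task on $A_i$ is running (Rule~\ref{rule2} gives non-preemptive execution; for the spinning case Rule~\ref{rule3} guarantees the spinner holds the CPU until preemption, and once preempted it immediately cancels), so no $\tau_l'$ can start a request in the meantime. Once Lemma~\ref{lem:onerequest} has been applied at the release instant and throughout $\tau_l^*$'s activity, uniqueness of the blocker follows and the lemma is proved.
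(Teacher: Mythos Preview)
Your proposal is correct and follows essentially the same approach as the paper: both proofs rely on Lemma~\ref{lem:onerequest} to conclude that at most one low-priority task on $A_i$ can have raised priority at any instant, so $\tau_i$ is blocked by at most one such request upon arrival. Your version is simply more explicit than the paper's terse argument, spelling out why no second $\tau_l'$ can initiate a fresh blocking request after $\tau_l^*$ releases (via the FP-FPS dispatch rule and Constraint~\ref{cons:sp}); the paper leaves this step implicit.
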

\begin{proof}
This follows directly from Lemma~\ref{lem:onerequest}. At a time instant, at most one task on a processor can have its priority raised for accessing a resource, whereas others wait in the ready queue with a base priority. Therefore, when $\tau_i$ arrives, it can be blocked by at most one request issued by $\tau_l$.
\end{proof}

\begin{lemma} \label{lem:rerequests}
If $P_h > P_i^k$, $\tau_i$ can issue at most $\left\lceil{\frac{R_{i}}{T_h}}\right\rceil$ re-requests for $r^k$ due to the preemptions of $\tau_h$, with cancellation applied.
\end{lemma}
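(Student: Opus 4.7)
The plan is to bound the number of re-requests of $\tau_i$ for $r^k$ caused by $\tau_h$ by pairing each such re-request with a distinct release (job) of $\tau_h$ within the response window of $\tau_i$, and then applying the standard FP-FPS bound on the number of releases of a higher-priority task during an interval of length $R_i$.

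First I would isolate the mechanism that produces a re-request. By Rule 3, a re-request for $r^k$ by $\tau_i$ is triggered only when $\tau_i$ is preempted while spinning for $r^k$; the preemption makes $\tau_i$ cancel its request, leave the FIFO queue, and later re-enter the queue upon resumption. Since the hypothesis $P_h > P_i^k$ together with Constraint~\ref{cons:sp} is exactly what enables $\tau_h$ to preempt $\tau_i$ while $\tau_i$ is spinning at priority $P_i^k$ (tasks with priority $\leq P_i^k$ cannot preempt), every cancellation attributable to $\tau_h$ must coincide with $\tau_h$ being active (released but not yet completed) at the instant of preemption.

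Next I would argue a one-to-one accounting between re-requests and jobs of $\tau_h$. Fix a job of $\tau_h$; while that job is pending, $\tau_i$ cannot resume spinning for $r^k$, because $\tau_h$ has strictly higher priority than $\tau_i$'s spin priority $P_i^k \geq P_i$. Hence between two consecutive preemptions of $\tau_i$ by $\tau_h$ that each cause a re-request, there must be a point where $\tau_h$ has completed and a new job of $\tau_h$ has been released. Consequently each re-request of $\tau_i$ caused by $\tau_h$ can be charged to a distinct release of $\tau_h$ during $\tau_i$'s response window.

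Finally I would invoke the standard release bound: within any interval of length $R_i$, the number of releases of $\tau_h$ is at most $\left\lceil R_i / T_h \right\rceil$. Combining with the injective charging argument above yields the claimed bound of $\left\lceil R_i / T_h \right\rceil$ re-requests.

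The main obstacle I anticipate is the pairing step, specifically ruling out that a single job of $\tau_h$ causes two re-requests of $\tau_i$ for $r^k$. I would handle this by observing that after $\tau_h$ preempts $\tau_i$, $\tau_i$ is removed from the queue and cannot resume spinning for $r^k$ until $\tau_h$ (and any equal- or higher-priority work it triggers) completes; thus the same job of $\tau_h$ has no opportunity to preempt $\tau_i$ twice during its spinning for the same resource request.
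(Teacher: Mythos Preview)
Your proposal is correct and follows essentially the same approach as the paper: both argue that Rule~3 ties each re-request to a preemption by $\tau_h$, and then bound the number of such preemptions by $\lceil R_i/T_h\rceil$ via the standard FP-FPS release-count argument. Your write-up is simply more explicit about the injective charging of re-requests to jobs of $\tau_h$, whereas the paper asserts the preemption bound directly in one line.
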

\begin{proof}
With Rule 3, $\tau_i$ issues a re-request for $r^k$ each time it is preempted during spinning. For $\tau_h$ with $P_h > P_i^k$, it imposes at most $\left\lceil{\frac{R_{i}}{T_h}}\right\rceil$ preemptions on $\tau_i$, hence, the lemma holds.
\end{proof}

Lemma~\ref{lem:interference} provides the bound of $I_i$ directly, and Lemmas~\ref{lem:onerequest} to~\ref{lem:rerequests} demonstrate the feasibility for bounding $E_i$, $B_i$ and $W_i$ under \pname{}. 
Based on the above, we show that \pname{} can provide fine-grained resource control via the flexible spin priorities, with predictable worst-case resource accessing behaviour of tasks.
However, the application of flexible spinning raises two major challenges:
\begin{itemize}
\item How to compute the worst-case blocking of $\tau_i$ with the flexible spin priority applied?
\item How to assign effective task spin priority given that it has a direct impact on the blocking of tasks?
\end{itemize}

In the following sections, we tackle the challenges by proposing (i) a new analysing technique that provides the worst-case blocking bound of tasks under \pname{} (Sec.~\ref{sec:RTA}); and (ii) a spin priority assignment (Sec.~\ref{sec:assignment}) that determines each $P_i^k$ in the system, providing predictability guarantee and enhancing the timing performance of \pname{}, respectively. 


\begin{figure}[t]
\centering
\hspace{-5pt}\includegraphics[width=.95\columnwidth]{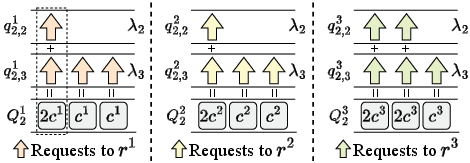}
\caption{Construction of $Q_2^k$ for $\tau_2$.}
\label{fig:construct_BQ}
\vspace{-10pt}
\end{figure}

\begin{figure*}[t]
\centering
\subfigcapskip=5pt 
\subfigure[The computation of $E_2$.]{\label{fig:example_E}  
{\includegraphics[width=0.5\columnwidth]{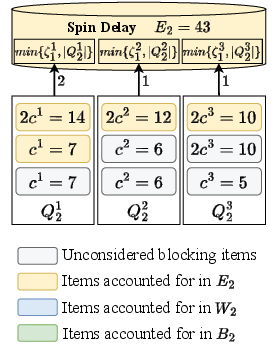}}}
\subfigure[Compute $B_2+W_2$ in $B_2 \rightarrow W_2$.]{\label{fig:example_BW}  
\includegraphics[width=.48\columnwidth]{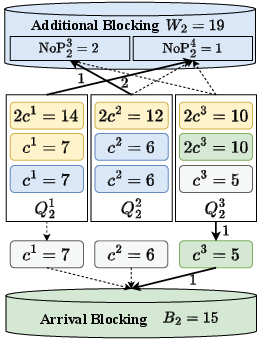}}
\subfigure[Compute $B_2+W_2$ in $W_2 \rightarrow B_2$.]{\label{fig:example_WB}  
{\includegraphics[width=.48\columnwidth]{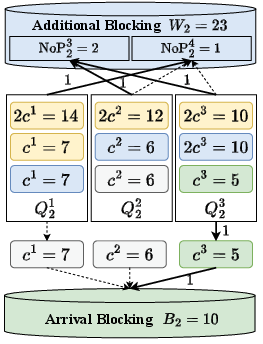}}}
\subfigure[The worst case of $B_2+W_2$.]{\label{fig:example_WorstCase}  
\includegraphics[width=.48\columnwidth]{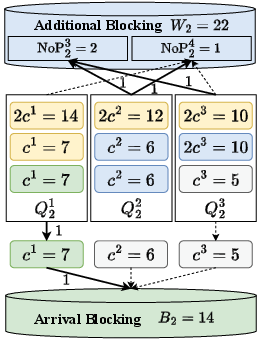}}
\caption{The problem of bounding $B_2$ and $W_2$ for $\tau_2$ in Tab.~\ref{tab:example_system}.}
\label{fig-2}
\vspace{-10pt}
\end{figure*}

\section{Blocking Analysis of \pname{}} 
\label{sec:RTA}

This section presents the blocking analysis for tasks managed by \pname{}. As with other FIFO spin-based protocols, the overall response time $R_i$ of $\tau_i$ under \pname{} is given in Eq.~\ref{eq:r}, with $E_i$, $B_i$ and $W_i$ demanding a bound. 
First, with Lemma~\ref{lem:onerequest}, $E_i$ is bounded by Eq.~\ref{eq:e} as proved in~\cite{wieder2013spin} for FIFO spin locks. 
However, due to the flexible spinning, existing techniques that analyse $B_i$ and $W_i$ separately~\cite{zhao2018thesis} are no longer applicable for \pname{}.
The fundamental reason is in \pname{}, a remote request of $r^k$ can impose a blocking on either $B_i$ or $W_i$, \eg, with $P_l^k \geq P_i$ (causes $B_i$) and $P_i^k < P_h$ (causes $W_i$).
However, this cannot occur in existing protocols, as $W_i=0$ in MSRP whereas $B_i$ in PWLP do not include any remote blocking~\cite{wieder2013spin}.


Below we first illustrate the problem of bounding $B_i$ and $W_i$ under \pname{} (Sec.~\ref{sec:problemofBW}). 
Then, we identify the possible blocking sources of $B_i$ and $W_i$ (Sec.~\ref{sec:identifySource}), and construct a minimum cost maximum flow-based analysis (Sec.~\ref{sec:MCMF}) that bounds $B_i + W_i$.
To facilitate the presentation, the following definitions are introduced. 
Notations introduced in this section are summarised in Tab.~\ref{tab:list_of_symbols_in_analysis}.
An example system (see Tab.~\ref{tab:example_system}) is applied to illustrate the analysing process.

\begin{definition}\label{def:RQ}
A \textit{Request Queue} contains the execution time of requests for $r^k$ on a remote processor $\lambda_m$ during $\tau_i$'s release, represented as $q_{i,m}^k \triangleq \{c^{k},...,c^{k} \}$ with $ |q_{i,m}^k| = \xi_{i,m}^k$.
\end{definition}

\begin{definition}\label{def:BQ}
A \textit{Blocking Queue} provides a list of blocking that requests from $A_i$ to $r^k$ can incur, denoted as $Q^k_i \triangleq
\bigcup_{n=1}^{\kappa}
\{\sum_{\lambda_m \neq A_i} q_{i,m}^k(n)\}$ with $\kappa = max\{ \xi_{i,m}^k ~|~ \lambda_m \neq A_i \}$. 
\end{definition}

The $n$\textsuperscript{th} item in $Q^k_i$ is obtained by adding the $n$\textsuperscript{th} element in each $q_{i,m}^k$ (if it exists). The length of $Q^k_i$ (\ie, $|Q^k_i|$) indicates the number of requests from $A_i$ to $r^k$ that can incur blocking, and values in $Q^k_i$ are the worst-case blocking time. 
Fig.~\ref{fig:construct_BQ} illustrates the construction of $q^k_{2,m}$ and $Q^k_2$ for $\tau_2$ in the example system (Tab.~\ref{tab:example_system}). The requests issued for each resource from $\lambda_2$ and $\lambda_3$ during $R_2$ are shown in the figure.

\begin{table}[t]
\caption{Notations applied in the proposed analysis.}
\label{tab:list_of_symbols_in_analysis}
\centering
\begin{tabular}{p{.13\columnwidth}p{.77\columnwidth}}
\hline
\textbf{Notations} & \textbf{Descriptions} \\
\hline
$\text{NoP}_{i}^h$ & Number of preemptions that $\tau_i$ incurs from $\tau_h$ during $R_i$. \\
$q_{i,m}^k$ & A list of resource execution time of requests to $r^k$ issued from tasks on a remote processor $\lambda_m$ during $R_i$. \\
$Q_{i}^k$ & A list of worst-case remote blocking incurred by requests to $r^k$ issued from $A_i$ during $R_i$. \\
$\alpha_{i}^k$ & The starting index (if it exists) of the unaccounted blocking items in $Q_{i}^k$ after the computation of $E_i$. \\
\hline
$\mathcal{L}b_i$ & A list of blocking items that can be accounted for in $B_i$. \\
$\mathcal{L}w_{i,h}$  & A list of blocking items that can be accounted for in $W_i$ due to $\tau_h \in \text{lhp}(i)$.  \\
$F^{b}(\tau_i)$ & Resources that can cause the arrival blocking to $\tau_i$. \\
$F^{w}(\tau_i,\tau_h)$ & Resources that can impose additional blocking due to $\tau_h$.\\
\hline
\end{tabular}
\vspace{-10pt}
\label{tab:list_of_symbols}
\end{table}

\subsection{The Problem of Bounding $B_i$ and $W_i$}\label{sec:problemofBW}

This section illustrates that existing analysing approaches can lead to incorrect bounds for \pname{}, with the analysis of $\tau_2$ as an example. 
We assume $\tau_3$ can release twice and $\tau_4$ can release once during the release of $\tau_2$, denoted as $\text{NoP}_2^3 = 2$, $\text{NoP}_2^4 = 1$. The spin priority of $\tau_2$ for each resource is $P^1_2=3$, $P^2_2=2$ and $P^3_2=2$, as shown in Tab.~\ref{tab:example_system}. 
Based on Eq.~\ref{eq:e}, $\tau_2$ incurs a spin delay of $E_2 = 43$. The blocking items in $Q_2^k$ that are accounted for in $E_2$ are highlighted in Fig.~\ref{fig:example_E}. 

Because $P_3 > P^2_2 = P^3_2$, $\tau_3$ can preempt $\tau_2$ twice in total (\ie, $\text{NoP}_2^3 = 2$) when $\tau_2$ is spinning for $r^2$ or $r^3$. In addition, $\tau_4$ can preempt $\tau_2$ once (\ie, $\text{NoP}_2^4 = 1$) when $\tau_2$ is waiting for any of the resources. 
Therefore, in the worst case, $\tau_2$ can issue three re-requests due to the cancellation (two for $r^2$ or $r^3$ and one for any resource), each can incur an additional blocking imposed by the remote requests. 
Below we illustrate the detailed computation of $B_2$ and $W_2$ based on the remote blocking items in $Q_2^k$ that are not accounted for in $E_2$.
Examples~\ref{eg:BW} and~\ref{eg:WB} compute $B_2$ and $W_2$ separately (as with existing analysis) in different orders, and Example~\ref{eg:worst} provides the actual worst-case bound on $B_2+W_2$.



\begin{example} \label{eg:BW}
\textit{(Bounding $B_2$ before $W_2$.)} 
\normalfont
The computation is illustrated in Fig.~\ref{fig:example_BW}. 
First, as $P_1^1=3$ and $P_1^3=2$, $\tau_2$ can incur $B_2$ with remote blocking due to one access of $\tau_1$ to either $r^1$ or $r^3$.
The worst-case of $B_2$ occurs when $\tau_2$ is blocked by $\tau_1$'s access to $r^3$, \ie, $B_2 = 10 + 5 = 15$, including one resource execution of $\tau_1$ itself. 
Then, $W_2$ is computed by the blocking items that are not accounted for in $E_2$ and $B_2$, leading to the highest bound of $W_2 = 7 + 6 + 6 = 19$, \ie, $\tau_2$ re-requests $r^1$ once and $r^2$ twice.
Thus, by computing $B_2$ and $W_2$ in order, we have a total blocking of $B_2+W_2 = 34$.
\end{example}

\vspace{-8pt}

\begin{example} \label{eg:WB}
\textit{(Bounding $W_2$ before $B_2$.)} 
\normalfont
The computation is illustrated in Fig.~\ref{fig:example_WB}.
First, the worst-case bound on $W_2$ is computed by $W_2 = 7 + 6 + 10 = 23$, indicating $\tau_2$ re-requests each resource once due to the cancellation.
Then, the highest $B_2$ is imposed by $r^3$ with $B_2 = 5+5$, where all blocking items for $r^1$ have been accounted for.
Therefore, by computing $W_2$ before $B_2$, we obtained a total blocking of $W_2 + B_2 = 33$.
\end{example}

\begin{table}[t]
\caption{An example system with 3 processors, 6 tasks, and 3 resources \textit{($\tau_2$ is the currently-examined task)}. }
\label{tab:example_system}
\centering
\resizebox{\columnwidth}{!}{
\begin{tabular}{c|c|c|c|c|c}
 \hline
 $\tau_x$ & $A_x$ & $P_x$ & $F(\tau_x)$ & $N_x^1,N_x^2,N_x^3$ & $P_x^1,P_x^2,P_x^3$ \\
 \hline
 $\tau_1$   & \multirow{4}{*}{$\boldsymbol{\lambda_1}$} & 1 & $\{ r^1, r^2, r^3 \}$  & 1, 1, 1  & 3, 1, 2 \\
 $\boldsymbol{\tau_2}$  &  & \textbf{2} & $\{\boldsymbol{r^1}, \boldsymbol{r^2}, \boldsymbol{r^3}\}$   & \textbf{1, 1, 1}   & \textbf{3, 2, 2} \\
 $\tau_3$  &  & 3 & $\varnothing$   & -, -, - & -, -, - \\ 
 $\tau_4$  &  & 4 & $\{ r^1\}$  & 1, -, - & 4, -, - \\ 
 \hline
 $\tau_5$   & $\lambda_2$ & 1 & $\{ r^1, r^2, r^3 \}$ & 1, 1, 2 &  1, 1, 1\\ 
 $\tau_6$   & $\lambda_3$ & 1 & $\{ r^1, r^2, r^3 \}$ & 3, 3, 3 &  1, 1, 1 \\ 
 \hline 
 \hline 
 $r^k$ &  \multicolumn{2}{c|}{Resource Type} & $c^k$ & \multicolumn{2}{c}{Tasks that request $r^k$}\\
 \hline 
 $r^1$ &  \multicolumn{2}{c|}{Global} & 7 & \multicolumn{2}{c}{$\{\tau_1, \tau_2, \tau_4, \tau_5, \tau_6\}$}\\
 $r^2$ &  \multicolumn{2}{c|}{Global} & 6 & \multicolumn{2}{c}{$\{\tau_1, \tau_2, \tau_5, \tau_6\}$}\\
 $r^3$ &  \multicolumn{2}{c|}{Global} & 5 & \multicolumn{2}{c}{$\{\tau_1, \tau_2, \tau_5, \tau_6\}$}\\
 \hline
\end{tabular}}
\vspace{-10pt}
\end{table}


\vspace{-10pt}

\begin{example} \label{eg:worst}
\textit{(The worst case of $B_2 + W_2$.)}
\normalfont
The computation is illustrated in Fig.~\ref{fig:example_WorstCase}. 
By examining the $Q_2^k$, we can observe a case in which $\tau_2$ incurs $B_2$ from $r^1$ while incurs $W_2$ from $r^2$ twice and $r^3$ once. In this case, $B_2 = 7 + 7 = 14$ and $W_2 = 6 + 6 + 10 = 22$, leading to $B_2 + W_2 = 36$. This reflects the worst-case blocking that $\tau_2$ can incur, providing evidence that existing analysing approaches can lead to incorrect results. 
\end{example}

As shown in the example, $\tau_2$ can incur a remote blocking in $B_2$ from $\{r_1, r_3\}$ and in $W_2$ from $\{r_1, r_2, r_3\}$, which are computed based on blocking items from the same blocking queues, \ie, $Q_2^1$ and $Q_2^3$.
That is, computing $B_2$ and $W_2$ independently in any order cannot yield the worst-case blocking time, in which $B_2$ (resp. $W_2$) may take the highest blocking item available in the queue, ignoring the fact that it can cause a decrease in $W_2$ (resp. $B_2$). 
Therefore, to bound the worst-case blocking in \pname{}, new analysing techniques are required that compute $B_i$ and $W_i$ in a collaborative fashion.

\subsection{Identifying Sources of $B_i$ and $W_i$}\label{sec:identifySource}
To bound $B_i$ and $W_i$ collaboratively, we first identify the blocking items in $Q_i^k$ that can be accounted for in $B_i$ and $W_i$, and then compute the bounding of $B_i+W_i$ based on these identified blocking items (see Sec.~\ref{sec:MCMF}).

\textbf{Blocking items of $B_i$.}
Under \pname{}, the set of resources that can cause $B_i$ is obtained in Eq.~\ref{eq:Fb}, in which $r^k \in F(\text{llp}(i))$ can impose a blocking if it is a local resource with a ceiling (denoted as $\widehat{P}^k_{\lambda_m}$) not lower than $P_i$  or is globally shared.
\begin{equation}\label{eq:Fb}
\small
F^b(\tau_i) = \{ r^k | r^k \in F(\text{llp}(i)) \wedge (\widehat{P}^k_{\lambda_m} \geq P_i \vee r^k \text{ is global} )   \}
\end{equation}

The resources in $F^b(\tau_i)$ can be classified into two types, depending on whether it can impose a remote blocking: 
\begin{itemize}
\item ones that only cause a blocking of $c^k$ -- local resources with $\widehat{P}^k_{\lambda_m} \geq P_i$ or global ones with $P^k_l < P_i$; and
\item ones that can include the remote blocking -- global resources with $P^k_l \geq P_i$.
\end{itemize}

Accordingly, the list of blocking items of $r^k$ that could be accounted for in $B_i$ is constructed in Eq.~\ref{eq:lb}, denoted as $\mathcal{L}b^k_i$. The notation $\alpha_i^k = min\{\zeta_i^k, |Q_{i}^k| \} + 1$ provides the starting index of the unaccounted blocking items in $Q_{i}^k$ (if it exists) after the computation of $E_i$.
\begin{equation} \label{eq:lb}
\small
\mathcal{L}b^k_i = 
\begin{cases} 
\{ c^k \}, ~~\text{if } \widehat{P}_{\lambda_m}^k \geq P_i \vee (r^k \text{ is global }  \wedge P_l^k < P_i) \\
\{ c^k \} \cup \bigcup_{n=\alpha_i^k}^{|Q_{i}^k|} \big\{ c^k + Q^k_i(n) \big\} , ~~~~~~~~~~~~~~~~~\text{else}\\
\end{cases}
\end{equation}

Note, if $r^k$ belongs to the second type, it is still possible that $r^k$ only imposes a blocking of $c^k$. This can happen if all items in $Q_i^k$ are accounted for in $E_i$ and $W_i$ during the computation, \eg, the case of $Q_2^1$ in Fig.~\ref{fig:example_WB}. Example~\ref{eg:lb} illustrates the construction of $\mathcal{L}b_2^k$ of $\tau_2$ in Tab.~\ref{tab:example_system}.

\begin{example} \label{eg:lb}
\normalfont
For $\tau_2$ in the example, $F^b(\tau_2)=\{r^1,r^2,r^3\}$ with $r^2$ included due to the non-preemptive execution. 
Thus, we have $\mathcal{L}b^1_2 = \{c^1, c^1 + c^1\}$, $\mathcal{L}b^2_2 = \{c^2\}$,  $\mathcal{L}b^3_2 = \{c^3, c^3 + 2c^3, c^3 + c^3\}$, \ie, a local blocking only or with remote blocking based on the unaccounted items (see Fig.~\ref{fig:example_E}).  
\end{example}


Finally, as described in Lemma~\ref{lem:fb}, the arrival blocking of $\tau_i$ can occur at most once by a request of $\tau_l$. 
Therefore, $B_i$ is obtained by taking at most one item from $\mathcal{L}b_i$ (Eq.~\ref{eq:BBB}), which contains all the blocking items of $\mathcal{L}b_i^k$ for all $r^k \in F^b(\tau_i)$.
\begin{equation}\label{eq:BBB}
\small
\mathcal{L}b_i = \bigcup_{r^k \in F^b(\tau_i)} \mathcal{L}b^k_i
\end{equation}

\textbf{Blocking items of $W_i$.}
Under \pname{}, a $\tau_h$ can preempt the spinning of tasks in $\tau_i \cup \text{lhp}(i)$ if it has a higher priority. This causes resource re-requests that can impose a (transitive) blocking on $W_i$, if there exist unaccounted remote blocking items in $Q_i^k$ (see Fig.~\ref{fig-2}).
The set of resources that can impose $W_i$ on $\tau_i$ due to $\tau_h$'s preemption is identified by Eq.~\ref{eq:Fw}.
\begin{equation} \label{eq:Fw}
\small
F^w(\tau_i, \tau_h) = \{ r^k ~\big|~   P_h > P_x^k,   \tau_x \in \tau_i \cup \text{lhp}(i) \}
\end{equation}

With $F^{w}(\tau_i,\tau_h)$ constructed, the candidate blocking items of $W_i$ caused by the preemptions of $\tau_h$ are obtained by Eq.~\ref{eq:lw}, denoted as $\mathcal{L}w_{i,h}$. For each $r^k \in F^w(\tau_i,\tau_h)$, the items in $ Q^k_i$ that are not accounted for in $E_i$ could be considered (if they exist). Accordingly, $n$ starts at $\alpha_i^k$ as shown in the equation.
\begin{equation}\label{eq:lw}
\small
\mathcal{L}w_{i,h} = \bigcup_{r^k \in F^w(\tau_i,\tau_h)} \bigcup_{n=\alpha_i^k}^{|Q_{i}^k|} \Big\{ Q^k_i(n) \Big\}
\end{equation}

In the worst case, $\tau_i$ can incur $\text{NoP}_i^h = \left \lceil \frac{R_i}{T_h} \right \rceil$ preemptions from $\tau_h$, and each preemption can cause an additional delay (Lemma~\ref{lem:rerequests}). Thus, for a given $\tau_h$, at most $\text{NoP}_i^h$ items can be accounted for in $\mathcal{L}w_{i,h}$ (if they exist) to bound the worst-case $W_i$. 
Example~\ref{eg:lw} shows the construction of $\mathcal{L}w_{2,3}$ and $\mathcal{L}w_{2,4}$.

\begin{example} \label{eg:lw}
\normalfont
For $\tau_2$ in the example, $F^w(\tau_2,\tau_3)=\{r^2, r^3\}$ and $F^w(\tau_2,\tau_4) = \{r^1, r^2, r^3\}$ based on Eq.~\ref{eq:Fw}. Then, we have $\mathcal{L}w_{2,3} = \{c^2, c^2, 2c^3, c^3\}$ and $\mathcal{L}w_{2,4} = \{c^1, c^2, c^2, 2c^3, c^3\}$, \ie, unconsidered blocking items in Fig.~\ref{fig:example_E}.
\end{example}

It is important to note that for two local high-priority tasks (say $\tau_{a}$ and $\tau_b$), they can preempt the accesses for the same $r^k$ if $F^w(\tau_i, \tau_a) \cap F^w(\tau_i, \tau_b) \neq \varnothing$, \eg, $r^2$ and $r^3$ in Example 5. Thus, $\mathcal{L}w_{i,a}$ and $\mathcal{L}w_{i,b}$ share the same blocking items in a $Q_i^k$ that can only be accounted for once. 
In addition, 
the same can occur between $\mathcal{L}b_i$ and $\mathcal{L}w_{i,h}$, where both of them contain the blocking items of $r^k \in F^b(\tau_i) \cap F^w(\tau_i,\tau_h)$, \eg, the blocking items of $r^1$ and $r^3$ as shown in Example~\ref{eg:lb} and~\ref{eg:lw}. 
As described, this is caused by the flexible spinning in \pname{}. Under \pname{}, $\tau_i$ can be blocked by $r^k$ upon its arrival, and incurs additional blocking from the same resource if $\tau_i$ is preempted when accessing $r^k$.
Therefore, Constraint~\ref{cons:problem} is constructed to guarantee a blocking item in $Q_i^k$ is not considered multiple times during the computation.  


\begin{constraint} \label{cons:problem}
The shared blocking items in 
$\mathcal{L}b_i$ and $\mathcal{L}w_{i,h}$ can be accounted for at most once when bounding $B_i+W_i$. 
\end{constraint}

\subsection{Bounding $B_i+W_i$ via MCMF}\label{sec:MCMF}

\textbf{Problem Formulation.}
Based on the constructed $\mathcal{L}b_i$ and $\mathcal{L}w_{i,h}$, we have demonstrated that the worst-case blocking bound of $B_i+W_i$ in nature is an optimisation problem~\cite{burkard2012assignment}, as formulated below. Notations $b_i$ and $w_{i,h}$ are the set of blocking items accounted for in $B_i$ and $W_i$, respectively.
The objective is to maximise $B_i+W_i$ with the constraints enforced. 
\begin{equation*} \label{eq:formualtion}
\small
\begin{aligned}
& \text{Given:} & & \Gamma, ~\Lambda, ~\mathbb{R} \\
& \text{Maximise}
& & B_i + W_i = \sum b_i + \sum_{\tau_h \in \text{lhp}(i)} \sum w_{i,h}  \\  
& \text{On} & & b_i \subseteq \mathcal{L}b_i,~ w_{i,h} \subseteq \mathcal{L}w_{i,h}\\
& \text{s.t.} & & |b_i| \leq 1,~|w_{i,h}| \leq \text{NoP}_i^h, \text{ and Constraint~\ref{cons:problem}} \\
\end{aligned}
\end{equation*}
\normalsize

Intuitively, this problem can be solved by the Mixed Integer Linear Programming (MILP)~\cite{wieder2013spin}. However, the MILP suffers from scalability issues with a high computation cost, imposing application difficulties in practice~\cite{glover1974implementation, bixby1980converting, zhao2017new}. 
Therefore, we construct an analysis of $B_i+W_i$ based on the minimum-cost maximum-flow problem~\cite{ford1956maximal,schrijver2002history}, providing a cost-effective approach for bounding the blocking under \pname{}.

\textbf{MCMF network.} 
The MCMF is extensively applied in the field of optimisation~\cite{magnanti1993network,cruz2023survey}. For a given problem, MCMF constructs a flow network and optimises towards the objective, by identifying the way to send the maximum flow through the network with the lowest cost~\cite{ahuja1995network}.
An \textit{MCMF network} is constructed as a directed graph $\mathcal{G} = (\mathsf{V}, \mathsf{E})$ containing a set of nodes $\mathsf{V}$ and edges $\mathsf{E}$~\cite{magnanti1993network}.
The network $\mathcal{G}$ has one source node $v_{src}$ and one sink node $v_{snk}$.
An edge connecting two nodes $(v_x, v_u) \in \mathsf{E}$ has a capacity of $\kappa_{x}^{u}$ and a cost of $\delta_{x}^{u}$, indicating the maximum flow allowed on this edge and the associated cost. 
A flow goes through edge $(v_x, v_u)$ has a volume of $f_{x}^{u}$ that must satisfy $0 \leq f_{x}^{u} \leq \kappa_{x}^{u}$ with a cost of $f_{x}^{u} \times \delta_{x}^{u}$.
Except for $v_{src}$ and $v_{snk}$, the total incoming and outgoing flow of a node must be identical. 
Network $\mathcal{G}$ has a flow of $\mathsf{F}$ going through the whole network from $v_{src}$ to $v_{snk}$.

In our context, a node represents (i) a blocking effect (\eg, the arrival blocking or the additional blocking caused by a $\tau_h$), or (ii) a blocking item in $Q_i^k$ that is not yet accounted for. Edges from a blocking effect to the items in $Q_i^k$ are created based on $\mathcal{L}b_i$ and $\mathcal{L}w_{i,h}$, indicating the items can be accounted for in that blocking effect. The capacity on an edge is the number of requests that can impose blocking, whereas the cost gives the actual blocking time. The objective is to obtain the maximum blocking (\ie, cost) in the network based on the number of (re-)requests that cause blocking in $B_i+W_i$. 
We note that the maximum cost can be effectively obtained with a trivial modification that sets all costs in $\mathcal{G}$ as negative.




\vspace{-3pt}
\SetKwInOut{Initialise}{Initialise}
\begin{algorithm}[t]
\small
\Initialise{$\mathsf{V} = \{v_{src}$, $v_{snk}\}$, $\mathsf{E}=\{\}$}

{\scriptsize\ttfamily/* {$\blackcircled{a}$ Constructing nodes for blocking items.} */} \\
\For{{\normalfont each $r^k \in F^b(\tau_i) \cup F^w(\tau_i)$}}{

\While{$ \alpha_i^k \leq n \leq |Q_i^k|$}{
$\mathsf{V} = \mathsf{V} \cup \{v_{x}\}$;~~
$\mathsf{E} = \mathsf{E} \cup \texttt{e}(v_{x},v_{snk}, 1, 0)$
}

}

{\scriptsize\ttfamily/* {$\blackcircled{b}$ Constructing nodes and constraints of $B_i$.}*/} \\
    $\mathsf{V} = \mathsf{V} \cup \{v_{B}\}$;~~$\mathsf{E} = \mathsf{E} \cup \texttt{e}(v_{src}, v_{B},1,0)$\\
    \For{\normalfont each $r^k \in F^b(\tau_i)$}{
        $\mathsf{V} = \mathsf{V} \cup \{v_{k}\}$;\\
        $\mathsf{E} = \mathsf{E} \cup \texttt{e}(v_{B},v_{k},1,c^k) \cup \texttt{e}(v_{k},v_{snk},1,0)$;\\
        \For{{\normalfont each $c^k + Q_i^k(n)$ in $\mathcal{L}b_i^k$}}{
        $\mathsf{E} = \mathsf{E} \cup \texttt{e}(v_{k}, \texttt{v}(Q_i^k(n)), 1, Q_i^k(n))$;\\
    }
    }

{\scriptsize\ttfamily/* {$\blackcircled{c}$ Constructing nodes and constraints of $W_i$.}*/} \\

\For{{\normalfont each $\tau_h \in \text{lhp}(i) \wedge |\mathcal{L}w_{i,h}|>0$}}{
    $\mathsf{V} = \mathsf{V} \cup \{v_{h}\}$;~~$\mathsf{E} = \mathsf{E} \cup \texttt{e}(v_{src}, v_{h}, \text{NoP}_i^h, 0)$\\
    \For{{\normalfont each $Q_i^k(n)$ in $\mathcal{L}w_{i,h}$}}{
        $\mathsf{E} = \mathsf{E} \cup \texttt{e}(v_h, \texttt{v}(Q_i^k(n)), 1, Q_i^k(n) )$;\\
    }
}


\Return $\mathcal{G}=(\mathsf{V},\mathsf{E})$;




\caption{Construction of the MCMF network.}

\label{alg:network}
\end{algorithm}

\textbf{Network Construction.} Following the above, Alg.~\ref{alg:network} presents the construction of the MCMF network (\ie, $\mathcal{G}$) for bounding $B_i+W_i$, with $v_{src}$ and $v_{snk}$ initialised.
The corresponding illustration is given in Fig.~\ref{fig:network}. To ease presentation, the number on an edge $(v_x, v_u)$ indicates its capacity (\ie, $\kappa_{x}^{u}$), and the associated cost (\ie, $\delta_{x}^{u}$) is given on node $v_u$. Notation $F^w(\tau_i) = \bigcup_{\tau_h \in \text{lhp}(i)} F^w(\tau_i, \tau_h)$ denotes all resources that can impose $W_i$.
Function $\texttt{e}(v_x, v_u, \kappa, \delta)$ constructs an directed edge from $v_x$ to $v_u$, with $\kappa_x^u=\kappa$ and $\delta_x^u = \delta$; and $\texttt{v}(Q_i^k(n))$ returns the corresponding node of a blocking item.
Essentially, the construction of $\mathcal{G}$ contains three major steps:
\begin{enumerate} [label=\arabic*.]
\item Nodes ($v_x$) for the blocking items in $Q_i^k$ are constructed, which can be accounted for in $B_i$ or $W_i$ (lines 1-6);
\item Nodes are created for $B_i$, and edges are added to connect $v_x$ based on $\mathcal{L}b_i$, indicating blocking items that can be accounted for in $B_i$ (lines 7-15); and
\item Nodes and edges for $W_i$ are constructed based on $\mathcal{L}w_{i,h}$ (lines 16-22). 
\end{enumerate}

\begin{figure}[t]
    \centering
    \vspace{-5pt}
    \includegraphics[width=\columnwidth]{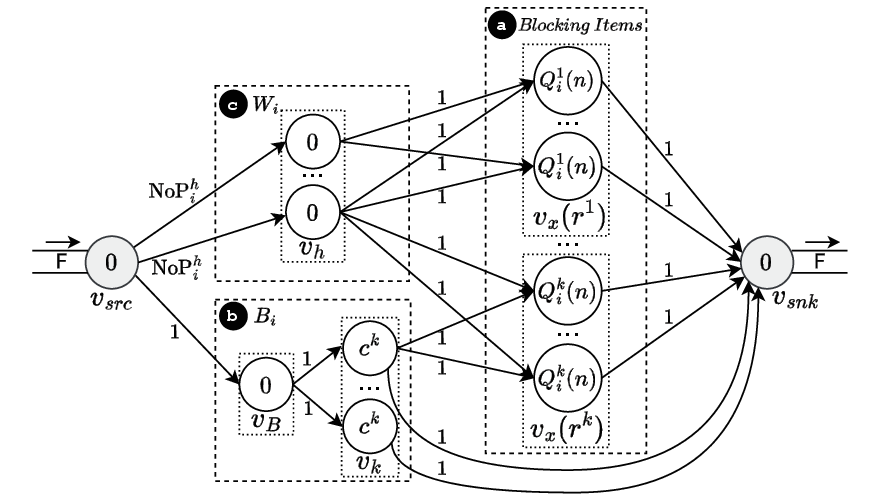}
    \caption{An illustrative MCMF network for bounding $B_i+W_i$.}
    \label{fig:network}
    \vspace{-15pt}
\end{figure}

For the first step (Fig.~\ref{fig:network}\blackcircled{a}), the algorithm iterates through each $r^k$ that can cause $B_i$ or $W_i$ (line 2), and creates a node $v_x$ for every item in $Q_i^k$ that has not being accounted for in $E_i$, \ie, $Q_i^k(n)$ with $\alpha_i^k \leq n \leq |Q_i^k|$ (line 3). The node is connected to $v_{snk}$ with a capacity of 1 and a cost of 0 (line 4), ensuring at most one flow with a volume of 1 can pass through the node. 
This ensures a blocking item is accounted for at most once even if it exists in $\mathcal{L}b_{i}$ and multiple $\mathcal{L}w_{i,h}$. Hence, Constraint~\ref{cons:problem} holds during the computation of $B_i+W_i$.


With nodes of the blocking items constructed, the second step (Fig.~\ref{fig:network}\blackcircled{b}) creates the node $v_B$ for computing $B_i$. Node $v_B$ is connected from $v_{src}$ with a capacity of 1 and a cost of 0 (line 8), which indicates at most one request to $r^k \in F^b(\tau_i)$ can impose $B_i$.
As $r^k$ can cause a local delay due to the non-preemptive execution, a node $v_k$ is constructed for every $r^k \in F^b(\tau_i)$ (line 10). Each of such nodes is connected from $v_{B}$ by an edge with a capacity of 1 and a cost of $c^k$; and is connected to $v_{snk}$ (line 11). 
In addition, if $r^k$ can impose a remote blocking (see Eq.~\ref{eq:lb}), an edge is added from $v_k$ to each of the blocking items in $\mathcal{L}b_i^k$ with a capacity of 1 and a cost of $Q_i^k(n)$ (lines 12-14). By doing so, we guarantee that at most one item in $\mathcal{L}b_{i}$ is accounted for (\ie, $|b_i|\leq 1$), as only a flow with a volume of 1 is allowed to pass through $v_B$.

For the final step (Fig.~\ref{fig:network}\blackcircled{c}), we construct nodes for $W_i$, each of which represents the additional blocking imposed by a $\tau_h \in \text{lhp}(i)$. 
For $\tau_h$ with $|\mathcal{L}w_{i,h}| > 0$, a node $v_h$ is created (line 18). Then, $v_h$ is connected from $v_{src}$ with a capacity of $\text{NoP}_i^h$ and a cost of 0 (line 18); and is connected to each item in $\mathcal{L}w_{i,h}$ with a capacity of 1 and a cost of $Q^k_i(n)$ (lines 19-21). This guarantees at most $\text{NoP}_i^h$ blocking items can be accounted for in $W_i$ with a given $\tau_h$ (\ie, $|w_{i,h}|\leq \text{NoP}^h_i$), and each blocking item is considered only once. To this end, we have constructed the complete MCMF network for bounding the worst-case $B_i + W_i$ under \pname{}, with the correctness justified in Theorem~\ref{maximum_flow}.

\begin{theorem}\label{maximum_flow}
The MCMF network $\mathcal{G}$ yields the maximum cost (\ie, worst-case bound on $B_i + W_i$) when $\mathsf{F}$ is maximised.
\end{theorem}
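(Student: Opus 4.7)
The plan is to prove Theorem~\ref{maximum_flow} by establishing a cost-preserving correspondence between feasible flows in $\mathcal{G}$ and feasible selections $(b_i, w_{i,h})$ of the optimisation problem of Sec.~\ref{sec:MCMF}, and then arguing that maximising the network flow (with the sign-flipped costs suggested at the end of Sec.~\ref{sec:MCMF}) is equivalent to maximising $B_i + W_i$. First, I would observe that because every capacity in $\mathcal{G}$ is integral and MCMF on integral capacities admits an integral optimal solution, it suffices to consider integral flows; each unit of flow then traces a simple path from $v_{src}$ to $v_{snk}$, and by construction every such path has exactly one of the two shapes $v_{src}\!\to\!v_B\!\to\!v_k\!\to\!v_{snk}$ (for a purely local arrival blocking of cost $c^k$) or $v_{src}\!\to\!v_B\!\to\!v_k\!\to\!v_x\!\to\!v_{snk}$ (for an arrival blocking that also absorbs a remote item $Q_i^k(n)$), or $v_{src}\!\to\!v_h\!\to\!v_x\!\to\!v_{snk}$ (for an additional-blocking item from $\mathcal{L}w_{i,h}$).

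Second, I would show that the constraints of the formulation are exactly mirrored by the capacity constraints of $\mathcal{G}$. The edge $\texttt{e}(v_{src},v_B,1,0)$ enforces $|b_i|\leq 1$; the edges $\texttt{e}(v_{src},v_h,\text{NoP}_i^h,0)$ enforce $|w_{i,h}|\leq \text{NoP}_i^h$; and the capacity-$1$ edge $\texttt{e}(v_x,v_{snk},1,0)$ on every blocking-item node forces Constraint~\ref{cons:problem}, since any unit of flow, whether it originates from $v_B$ via $v_k$ or from some $v_h$, must exit through the same capacity-$1$ bottleneck. In the $B_i$ branch the intermediate node $v_k$ with capacity-$1$ in-edge of cost $c^k$ guarantees that the local delay $c^k$ is charged whenever $r^k$ is selected for arrival blocking, and at most one of its associated remote items in $\mathcal{L}b_i^k$ is additionally picked; this exactly matches the two cases of Eq.~\ref{eq:lb}.

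Third, I would verify the cost matches. Because cost is additive along a path and the only non-zero costs lie on the edges to $v_k$ (carrying $c^k$) and on the edges into blocking-item nodes (carrying $Q_i^k(n)$), summing costs over all unit-flow paths reproduces $\sum b_i + \sum_h \sum w_{i,h} = B_i + W_i$. Conversely, given any feasible selection $(b_i, w_{i,h})$, I would explicitly construct an integer flow of cost $B_i + W_i$ by routing one unit through $v_B$ per element of $b_i$ and one unit through $v_h$ per element of $w_{i,h}$; feasibility of the selection (|b_i|\leq 1, |w_{i,h}|\leq \text{NoP}_i^h, each blocking item taken at most once) translates directly into capacity feasibility of the flow. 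This establishes a cost-preserving bijection between feasible integer flows and feasible selections.

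Finally, maximising $\mathsf{F}$ with negated edge costs (as indicated in the paragraph preceding Alg.~\ref{alg:network}) is equivalent to finding a minimum-cost flow that saturates as much of the source capacity as possible, which by the bijection corresponds to a feasible selection $(b_i, w_{i,h})$ of maximum total blocking, i.e.\ the worst case of $B_i + W_i$. The main obstacle I anticipate is the ``shared items'' subtlety underlying Constraint~\ref{cons:problem}: a single $Q_i^k(n)$ may appear in $\mathcal{L}b_i^k$ and in several $\mathcal{L}w_{i,h}$ for different $\tau_h$, and one must argue that the single capacity-$1$ out-edge of $v_x$ is both necessary and sufficient to rule out double counting without artificially forbidding any legitimate combination; I would handle this by inspecting the possible paths through $v_x$ and showing that competing units of flow through $v_x$ are precisely the ``collisions'' that Constraint~\ref{cons:problem} outlaws, so the network neither under- nor over-constrains the optimisation.
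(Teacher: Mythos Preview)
Your proposal is correct but takes a genuinely different route from the paper's own proof, and in fact addresses the more substantive part of the claim.

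The paper's proof is a three-line contradiction argument aimed solely at the statement ``the maximum cost is attained at maximum flow'': it assumes the highest cost $\Delta$ is realised by some $\mathsf{F}' < \mathsf{F}$, observes that an augmenting $v_{src}$--$v_{snk}$ path must still exist, and concludes that pushing flow along it would raise the cost, contradicting maximality of~$\Delta$. The paper never explicitly argues that feasible flows in $\mathcal{G}$ correspond to feasible selections $(b_i,\{w_{i,h}\})$ of the optimisation problem; that correspondence is left implicit in the construction of Alg.~\ref{alg:network}.

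Your plan does the opposite: you spend almost all of the effort establishing the cost-preserving bijection between integral flows and feasible selections, checking that the capacity-$1$ edge out of $v_B$, the capacity-$\text{NoP}_i^h$ edges out of each $v_h$, and the capacity-$1$ edges $v_x\to v_{snk}$ encode exactly $|b_i|\le 1$, $|w_{i,h}|\le \text{NoP}_i^h$, and Constraint~\ref{cons:problem}, respectively. This is the part of the theorem that actually ties the network to $B_i+W_i$, and your treatment of it is sound. What your final paragraph leaves slightly implicit is precisely the point the paper's proof isolates: that the maximum cost over \emph{all} feasible flows coincides with the maximum cost among \emph{maximum} flows. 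You can close this by noting that every edge cost in $\mathcal{G}$ is a non-negative duration ($c^k$ or $Q_i^k(n)$), so any augmenting path has non-negative cost and increasing $\mathsf{F}$ never decreases the total cost; hence the MCMF solver (with negated costs) indeed returns the global maximum. Combining your bijection with this observation yields a proof strictly more complete than the paper's.
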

\begin{proof}
We prove this theorem by contradiction. Let $\mathsf{F}$ and $\Delta$ denote the maximum flow and the highest cost of $\mathcal{G}$, respectively. Assuming $\Delta$ is achieved by $\mathsf{F}'$ with $\mathsf{F}' < \mathsf{F}$, there exists at least one feasible flow from $v_{src}$ to $v_{snk}$ that can lead to a higher cost. This contradicts the assumption that $\Delta$ is the maximum value. Hence, the theorem follows.
\end{proof}

This concludes the MCMF-based analysis that produces the bound on $B_i +W_i$, and subsequently, the complete response time analysis of \pname{} with Eq.~\ref{eq:r}.
With $\mathcal{G}$ constructed, existing MCMF solvers with polynomial-time complexity~\cite{cruz2023survey} can be applied to obtain the maximum $B_i+W_i$, with the trivial modification of the costs mentioned above.
In addition, we note that more advanced solvers (\eg, the one in~\cite{chen2022maximum}) are available with near-linear time complexity. However, this is not the focus of this paper and is referred to~\cite{cruz2023survey}. 

In addition, similar to the analysis in~\cite{wieder2013spin, zhao2017new}, the response time of $\tau_i$ in our analysis depends on the response time of potentially all tasks in the system due to the back-to-back hit (\ie, the computation of $\xi_{i,m}^k$ in Eq.~\ref{eq:e}). Therefore, with an initial response time (\eg, $R_i = \overline{C_i}$), the analysis is performed in an iterative and alternative fashion that updates $R_i$ for all tasks collectively~\cite{wieder2013spin}.
In each iteration, $E_i$ is computed by Eq.~\ref{eq:e} and $B_i+W_i$ is solved via the MCMF network based on the current $R_i$. The computation finishes if a fixed $R_i$ is obtained for each task or a task has missed its deadline.



\section{Spin Priority Assignment of \pname{}} \label{sec:assignment}

This section determines the spin priority of tasks under \pname{}.
As revealed by the constructed analysis, the choice of spin priority directly affects the blocking of tasks. For instance, given $\tau_i$ and $\tau_l \in \text{llp}(i)$, a reduction in $P_l^k$ of $\tau_l$ could reduce $B_i$ of $\tau_i$, but can increase $W_l$ for $\tau_l$ (see Eq.~\ref{eq:Fb} and~\ref{eq:Fw}).
Therefore, the assignment of spin priorities inherently involves managing the trade-offs between $B_i$ and $W_l, \forall \tau_l \in \text{llp}(i)$ so that the blocking incurred by tasks with high urgency can be effectively reduced, enhancing the performance of \pname{}.

Following this intuition, a spin priority assignment is constructed to determine $P_i^k$ of each $\tau_i$ and $r^k \in F(\tau_i)$ within the range of $P_i$ 
and $\widehat{P}$ (see Constraint~\ref{cons:sp}). 
The core idea is: from the perspective of $\tau_i$, the algorithm finds the appropriate $P_l^k$ for $\tau_l \in \text{llp}(i)$ that reduces $B_i$ of $\tau_i$ while keeping $W_l$ of $\tau_l$ reasonable, such that each task in the system can meet its deadline. 
To achieve this, we initialise $P_i^k$ with a high spin priority and examine each $\tau_i$ on a processor.
For a given $\tau_i$, a linear search is applied that gradually reduces $P_l^k$ of $\tau_l \in \text{llp}(i)$ to reduce $B_i$ so that $R_i \leq D_i$. 



Ideally, the proposed assignment can be conducted based on the constructed analysis, which provides the required timing bounds, \eg, $B_i$, $W_i$, and $R_i$. 
However, during the search, a substantial amount of invocations to the analysis are required with intensive MCMF problem-solving, imposing a significant cost that greatly undermines the applicability of the proposed assignment. 
Therefore, approximations are applied to estimate the timing bounds, which avoids high computational demand while providing effective guidance for the assignment.
Below we first illustrate the process of the proposed spin priority assignment (Sec.~\ref{sec:assignment-slack}), and then present a cost-effective approximation for the required timing bounds (Sec.~\ref{sec:assignment-approx}). 




\vspace{-5pt}
\SetKwInOut{Input}{Inputs}
\SetKwInOut{Parameter}{Parameters}
\begin{algorithm}[h]
\setlength{\textfloatsep}{0.5em} 
\small

\For{{\normalfont each $\lambda_m \in \Lambda$}}{
    {\scriptsize\ttfamily/*Initialise $P_i^k$ based on Theorem~\ref{the:special_case}.*/} \\
    \For{{\normalfont each $\tau_i \in \Gamma_{m}$ and $ r^k \in F(\tau_i)$}}{
       
        \eIf{$\zeta_i^k \geq \xi_{i,m}^k, \forall \lambda_m \in \Lambda \backslash A_i$}{ 
            $P_i^k = P_i$;
        }{
            $P_i^k = \widehat{P}$;
        }
        }

    {\scriptsize\ttfamily/*Assignment of $P_i^k$ using a linear search.*/} \\
    \For{{\normalfont each $\tau_i \in \Gamma_{m}$, highest $P_i$ first}}{
        $F^*(\tau_i) = \{ r^k~|~r^k \in F(\text{\normalfont llp}(i)) \wedge P^k_l \geq P_i \}$;\\
        
        \While{$R_i > D_i \wedge F^*(\tau_i) \neq \varnothing$}{
            $r^k \gets maxk(B_i)$;\\
            \eIf{$r^k \in F^*(\tau_i)$}{
                $P^k_l = P_i - 1, \forall \tau_l \in \text{llp}(i)$;\\
                $F^*(\tau_i) = F^*(\tau_i)~\backslash~ r^k$;\\
            } 
            {
                break;\\
            }
            
        }
    }

}

\Return $P^k_i, \forall \tau_i \in \Gamma, \forall r^k \in F(\tau_i)$;

\caption{The process of assigning spin priority.}
\label{alg:assign}
\end{algorithm}

\subsection{The Process of Spin Priority Assignment} \label{sec:assignment-slack}

Alg.~\ref{alg:assign} presents the spin priority assignment for \pname{} based on Constraint~\ref{cons:sp} that specifies the range of spin priorities of a task.
As shown at line 1, the algorithm iterates through each $\lambda_m \in \Lambda$ and assigns spin priorities of tasks executing on $\lambda_m$. 

\textbf{Initialisation.}
From lines 3 to 9, the spin priorities are initialised for the search-based assignment to take place in the next phase. To enforce a one-way linear search, the initialisation generally sets $P_i^k$ with the highest priority ($P_i^k=\widehat{P}$ at line 7), so that $P_i^k$ can only decrease during the search-based assignment. 
However, some tasks are initialised with $P^k_i=P_i$ at line 5. 
Such initialisation is conducted if all remote blocking from $r^k$ is accounted for in $E_i$, indicating that $r^k$ cannot impose any extra blocking on $\tau_i$ regardless of $P^k_i$. 
Therefore, with $P^k_i=P_i$, no extra blocking is imposed on $\tau_i$, and the arrival blocking incurred by $\text{lhp}(i)$ due to $\tau_i$'s access for $r^k$ is minimised. 
The condition at line 4 is used to identify these tasks and is justified in Theorem~\ref{the:special_case}, where $\zeta_i^k$ and $\xi_{i,m}^k$ are the number of requests issued by $\tau_i \cup \text{lhp}(i)$ and $\Gamma_m$ during the release of $\tau_i$, respectively.

\begin{theorem} \label{the:special_case}
If $\zeta_i^k \geq \xi_{i,m}^k, \forall \lambda_m \in \Lambda \backslash A_i$, $\tau_i$ incurs the same worst-case blocking caused by $r^k$ with any $P_i^k$.
\end{theorem}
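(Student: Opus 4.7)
The plan is to show that under the stated hypothesis, every remote request for $r^k$ that might contribute to $\tau_i$'s blocking is fully absorbed into the spin delay $E_i$, so the residual contributions of $r^k$ to $B_i$ and $W_i$ are determined purely by fixed system parameters and by other tasks' spin priorities, never by $P_i^k$ itself. Once this is established, the worst-case blocking attributable to $r^k$ becomes invariant under any feasible choice of $P_i^k$.

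The core algebraic step is to combine the hypothesis $\zeta_i^k \geq \xi_{i,m}^k$ for every $\lambda_m \neq A_i$ with Definition~\ref{def:BQ}, which gives $|Q_i^k| = \max_{\lambda_m \neq A_i} \xi_{i,m}^k$. Together these yield $\zeta_i^k \geq |Q_i^k|$, and hence $\alpha_i^k = \min\{\zeta_i^k, |Q_i^k|\} + 1 = |Q_i^k| + 1$. Thus the index range $\alpha_i^k \leq n \leq |Q_i^k|$ that collects the ``unaccounted'' remote blocking items for $r^k$ is empty.

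Next I would inspect in turn each blocking term attributable to $r^k$. For the spin delay, Eq.~\ref{eq:e} evaluates the summand as $\min\{\zeta_i^k, \xi_{i,m}^k\}\cdot c^k = \xi_{i,m}^k \cdot c^k$, which does not involve $P_i^k$. For the additional blocking, the inner union in Eq.~\ref{eq:lw} ranges over the empty index set, so $r^k$ contributes nothing to any $\mathcal{L}w_{i,h}$, regardless of whether $P_i^k$ places $r^k$ into $F^w(\tau_i,\tau_h)$. For the arrival blocking, both branches of Eq.~\ref{eq:lb} collapse to $\mathcal{L}b_i^k \subseteq \{c^k\}$, and inclusion of the local term $c^k$ is governed by $P_l^k$ and by whether $r^k$ is local or global, but not by $P_i^k$.

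Finally I would lift these pointwise observations to the MCMF formulation of Section~\ref{sec:MCMF}: the subgraph of $\mathcal{G}$ associated with $r^k$ is unchanged across valid choices of $P_i^k$, and since $r^k$ contributes no remote-blocking nodes, Constraint~\ref{cons:problem} cannot couple $r^k$ to any other resource in a $P_i^k$-dependent way. The maximum-cost flow therefore decomposes into an invariant $r^k$-piece and a piece that is itself independent of $P_i^k$, giving the claimed equality. The main subtlety is that $P_i^k$ still syntactically appears in $F^w(\tau_i,\tau_h)$ and could in principle add or remove an edge from some $v_h$ to $\texttt{v}(Q_i^k(n))$; the argument must verify that any such edge has no candidate target node under the emptiness established above, so its presence or absence leaves the maximum-cost solution unchanged.
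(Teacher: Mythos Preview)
Your proposal is correct and follows the same core idea as the paper's proof: the hypothesis forces $\min\{\zeta_i^k,\xi_{i,m}^k\}=\xi_{i,m}^k$ on every remote processor, so all remote requests for $r^k$ are absorbed into $E_i$ and no residual blocking items remain to be influenced by $P_i^k$. The paper's proof states this in two sentences without unpacking $\alpha_i^k$, $\mathcal{L}b_i^k$, $\mathcal{L}w_{i,h}$, or the MCMF graph explicitly; your version is simply a more detailed expansion of the same argument.
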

\begin{proof}
This is guaranteed by Eq.~\ref{eq:e}, which computes the spin delay based on $\min\{\zeta_i^k, \xi_{i,m}^k\}$. 
Therefore, if $\zeta_i^k \geq \xi_{i,m}^k$ holds for each remote processor $\lambda_m$, all remote requests to $r^k$ are accounted for in $E_i$. Thus, $r^k$ cannot impose any extra blocking on $\tau_i$ given any $P_i^k$. Hence, the theorem holds.
\end{proof}

The benefits of applying the condition at line 4 are: (i) a direct identification of the spin priority that cannot cause any extra blocking on $\tau_i$ and $\text{lhp}(i)$; and (ii) a speed-up of the assignment as some $P^k_i$ are exempted by the initialisation.

\textbf{Search-based Assignment.} 
The assignment starts from the highest-priority task (say $\tau_i$) on $\lambda_m$ (line 11). 
From the perspective of $\tau_i$, a linear search (lines 11 to 22) is conducted to gradually reduce $B_i$ by decreasing $P^k_l$ of all $\tau_l \in \text{llp}(i)$, so that $\tau_i$ can meet its deadline. 
We note that adjusting (\ie, decreasing) $P^k_l$ will not increase the blocking of tasks in $\text{lhp}(i)$, \ie, those that are examined before $\tau_i$.

For each $\tau_i$, function $F^*(\tau_i)$ is firstly constructed to identify the set of global resources required by $\tau_l \in \text{llp}(i)$ and can cause $B_i$ due to a high spin priority, \ie, $P^k_l \geq P_i$ (line 12). The $F^*(\tau_i)$ provides the target set of resources in which the $P^k_l$ can be exploited to reduce $B_i$. 
From lines 13 to 21, a linear search is applied to adjust $P^k_l$ for resources in $F^*(\tau_i)$ with the objective of $R_i \leq D_i$ (where applicable). 

First, we obtain the resource $r^k$ that is currently causing the maximum arrival blocking ($maxk(B_i)$) at line 14. If $r^k \in F^*(\tau_i)$ (line 15), it indicates $B_i$ can be reduced by decreasing $P_l^k$.
In this case, $P^k_l$ is set to $P_i-1$ for all tasks in $\text{llp}(i)$ (line 16), and $r^k$ is removed from $F^*(\tau_i)$ (line 17). 
By doing so, the spinning of $\tau_l$ for $r^k$ cannot delay $\tau_i$ when it arrives.
Otherwise ($r^k \notin F^*(\tau_i)$), the linear search for $\tau_i$ is terminated directly, as $B_i$ is not reducible by adjusting the spin priorities, \eg, $r^k$ is a local resource.
This repeats until $R_i \leq D_i$ or the target resource set $F^*(\tau_i)$ becomes empty. 





With the linear search applied for each task, appropriate spin priorities can be determined by considering the task execution urgency.  
The algorithm finishes after each task in $\Gamma$ is examined, with at most $|\Gamma| \times |\mathbb{R}|$ iterations.

\begin{figure*}[t]
\vspace{-10pt}
\centering
\subfigtopskip=0pt 
\subfigbottomskip=3pt 
\subfigcapskip=-2pt 
\subfigure[Schedulability with varied $N$.]{\label{fig:experiment_fig_task}  
{\includegraphics[width=.6\columnwidth]{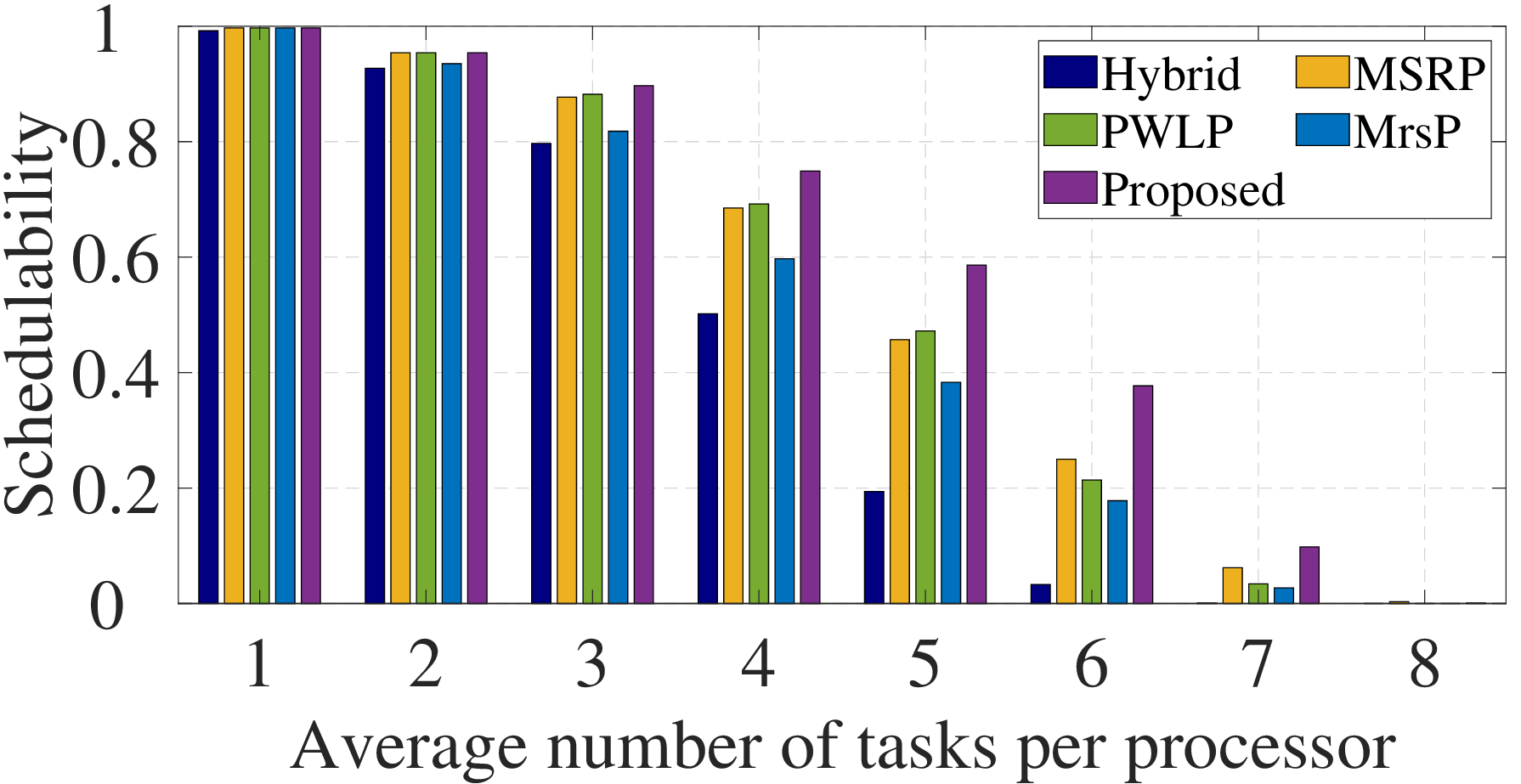}}}
\subfigure[Schedulability with varied $M$.]{\label{fig:experiment_fig_core}  
{\includegraphics[width=.6\columnwidth]{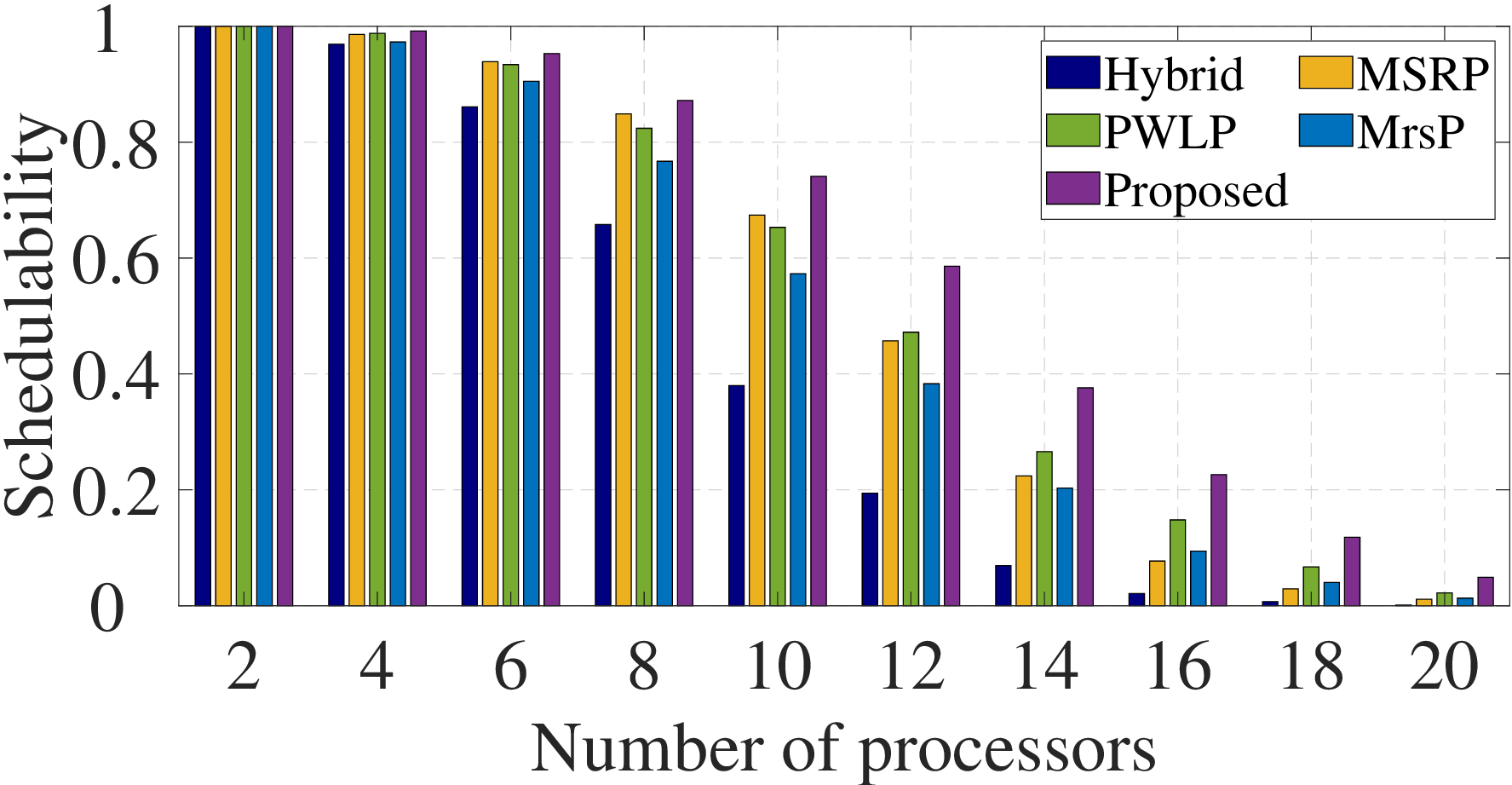}}}
\subfigure[Schedulability with varied $A$.]{\label{fig:experiment_fig_access}  
{\includegraphics[width=.6\columnwidth]{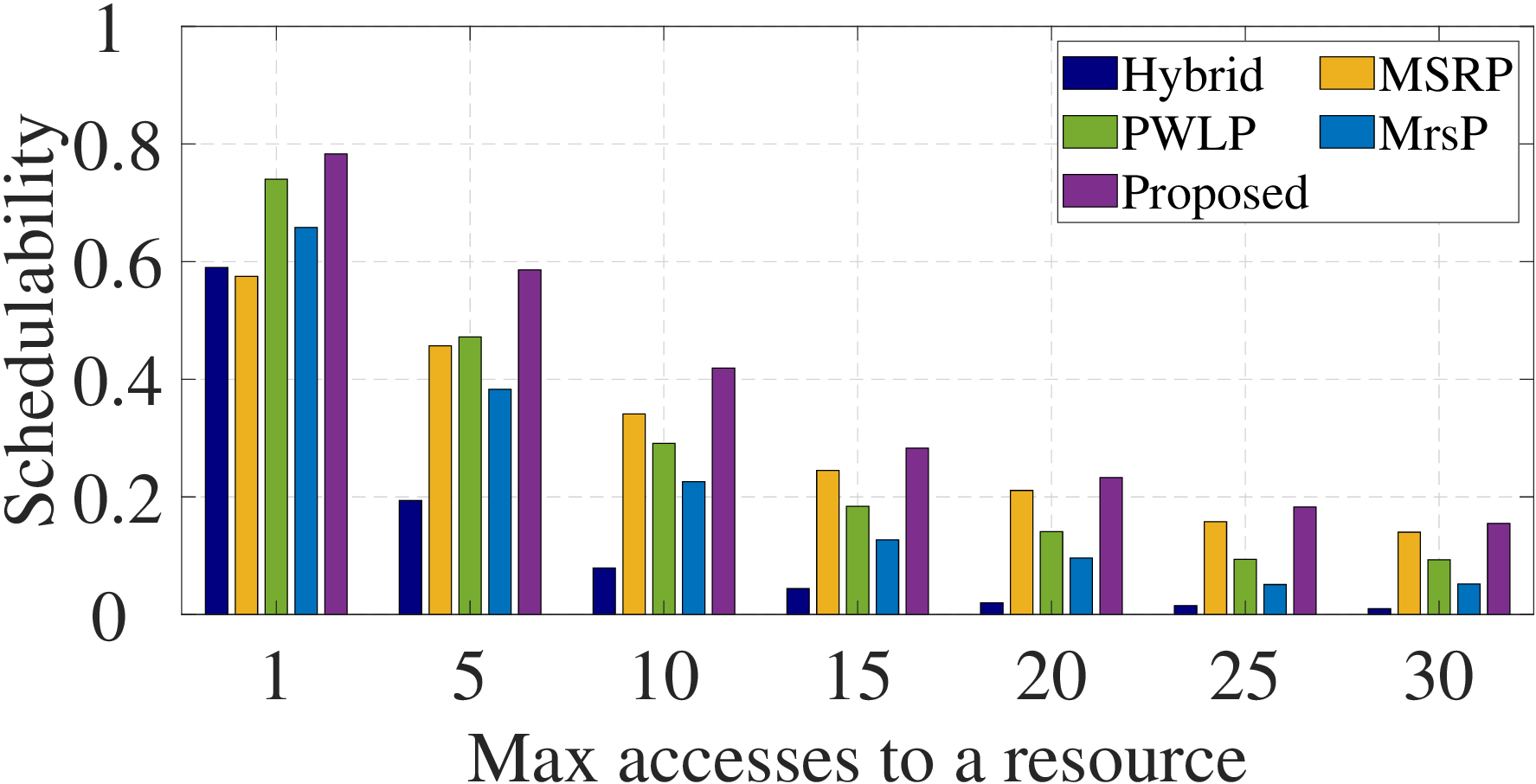}}}
\subfigure[Schedulability with varied $L$.]{\label{fig:experiment_fig_length}  
{\includegraphics[width=.6\columnwidth]{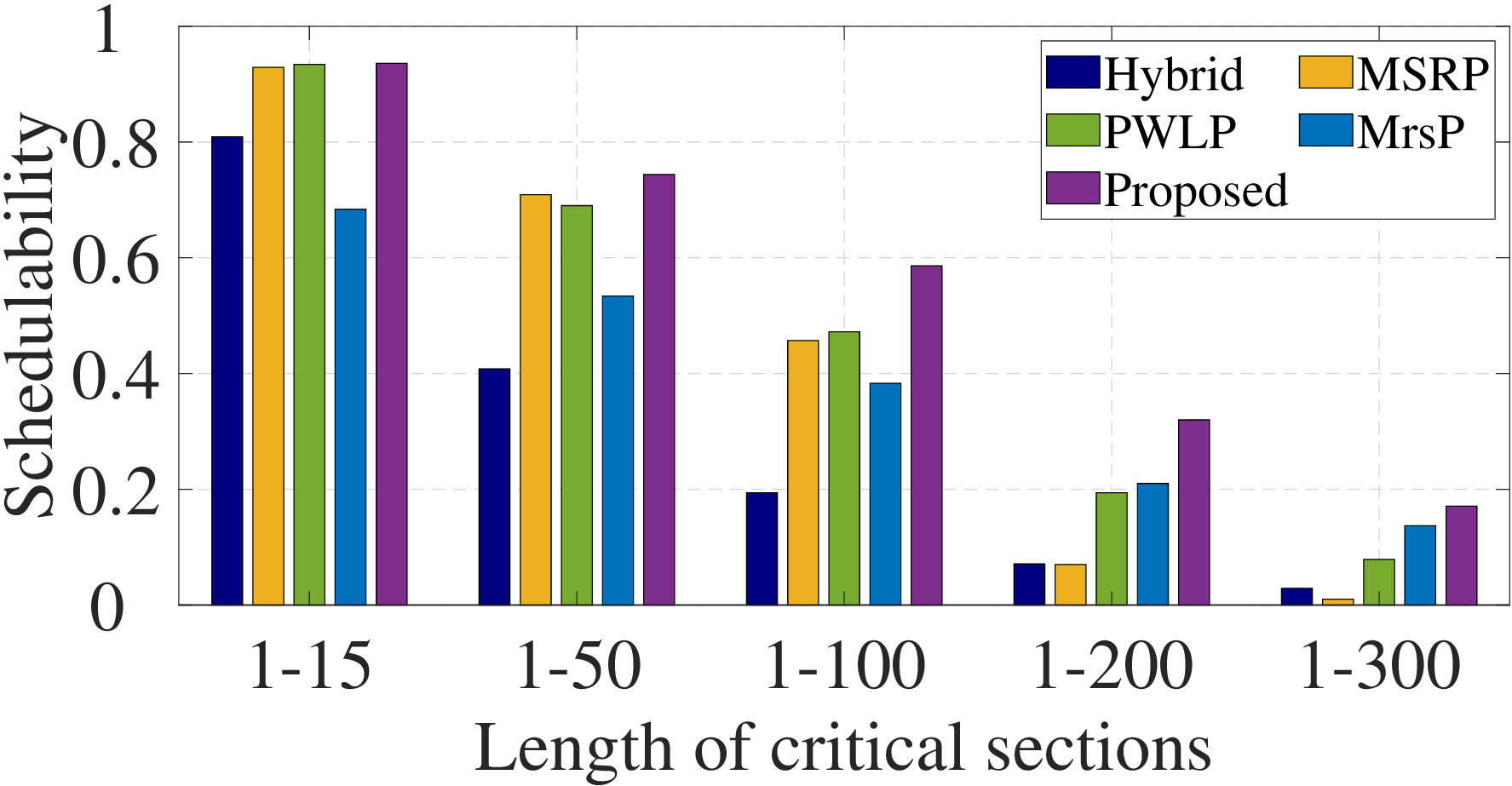}}}
\subfigure[Schedulability with varied $rsf$.]{\label{fig:experiment_fig_rsf}  
\includegraphics[width=.6\columnwidth]{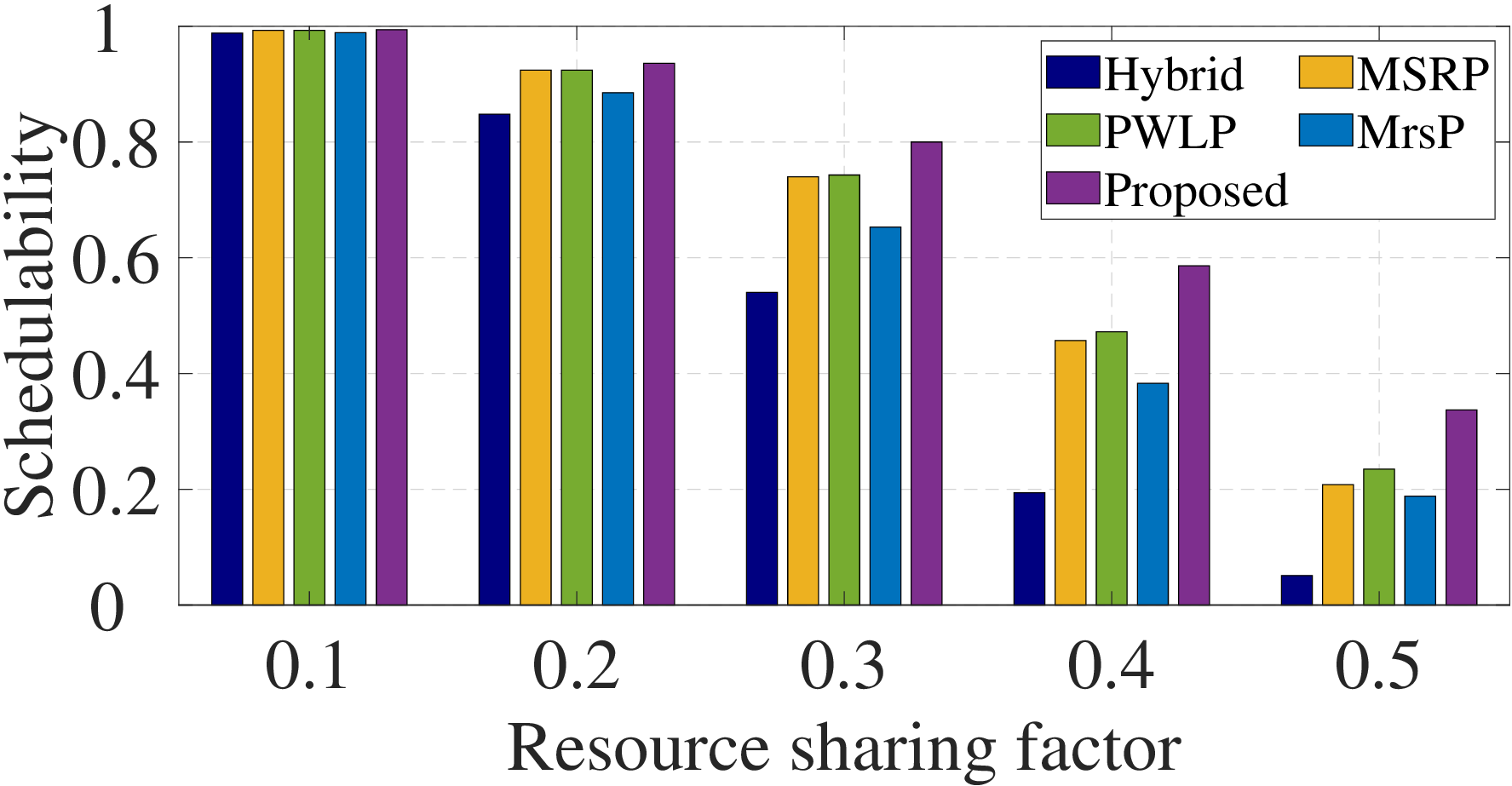}}
\subfigure[Schedulability with varied $K$.]{\label{fig:experiment_fig_resource}  
\includegraphics[width=.6\columnwidth]{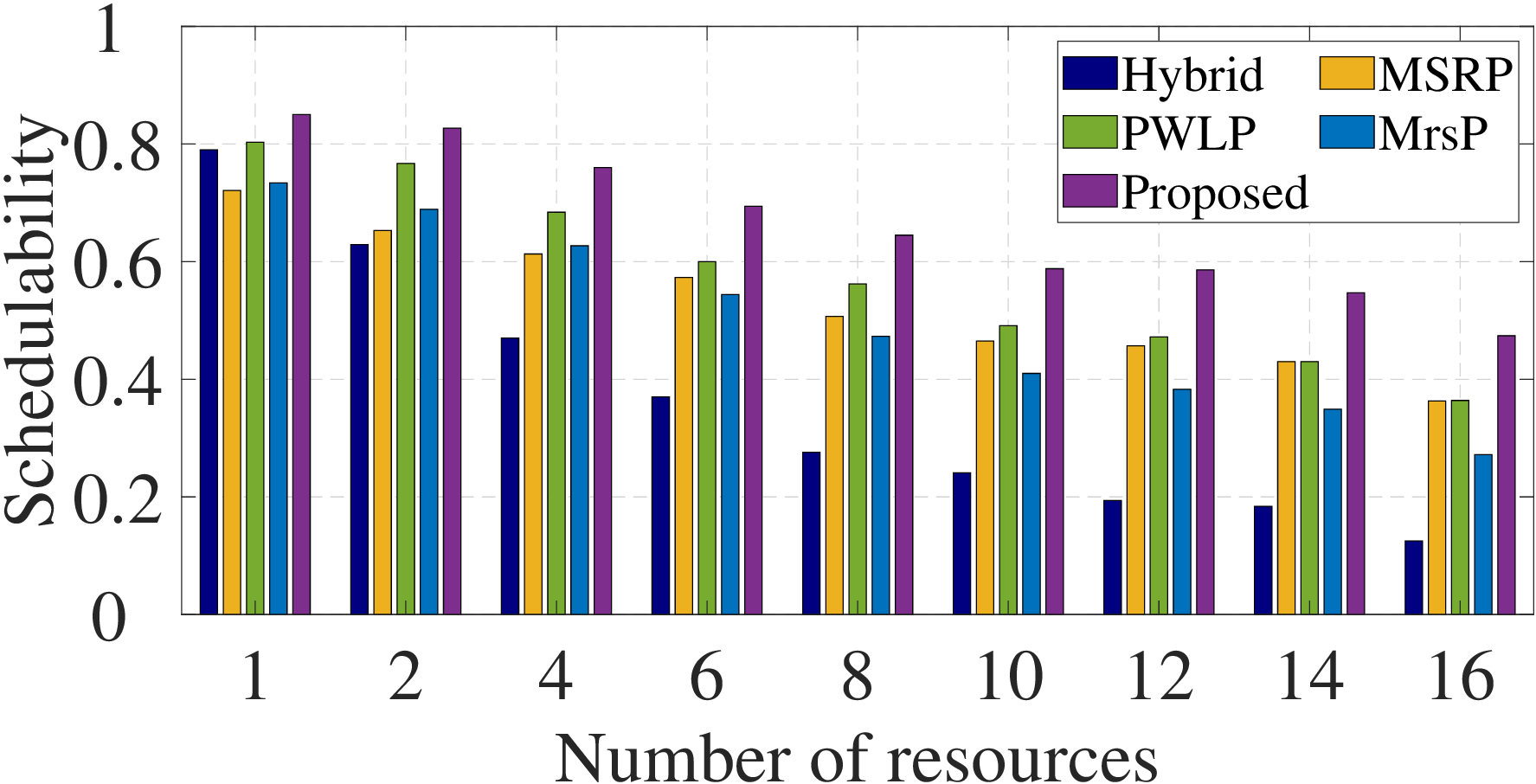}}
\caption{System schedulability with  $N=5$, $M=12$, $A=5$, $L=\left [1 \mu s,100 \mu s \right ] $, $rsf=0.4$, and $K = M$ resources.}
\vspace{-15pt}
\label{fig-4}
\end{figure*}


\subsection{Implementing the Assignment} \label{sec:assignment-approx}
As shown in Alg.~\ref{alg:assign}, extensive computations are required for $\zeta_i^k$, $\xi^k_{i,m}$, $B_i$ (including $maxk(B_i)$), and $R_i$ during the assignment, imposing obstacles for the application of \pname{}. To address this issue, approximations of these values are applied instead of the analysis, avoiding the high computational demand while preserving the effectiveness of the proposed assignment~\cite{zhao2023universal}. We note that various approaches are feasible to approximate the timing bounds, below we present a feasible implementation inspired by the approach applied in~\cite{zhao2023universal}.

First, the condition $\zeta_i^k \geq \xi_{i,m}^k$ at line 4 is approximated as $\phi^k(\tau_i \cup \text{lhp}(i)) \geq \phi^k(\Gamma_m)$, in which 
$\phi^k(\cdot)$ indicates the accessing frequency of $r^k$ by the given tasks, \eg, $\phi^k(\tau_i)=\frac{N_i^k}{T_i}$ for $\tau_i$~\cite{zhao2023universal}. 
Accordingly, 
$\phi^k(\tau_i \cup \text{lhp}(i)) = \sum_{\tau_x \in \tau_i \cup \text{lhp}(i)} \frac{N^k_x}{T_x}  $ and $\phi^k(\Gamma_{m}) = \sum_{\tau_j \in \Gamma_{m}} (\frac{N^k_j}{T_j} +\frac{N^k_j}{T_i}$), respectively.
In general, a higher $\frac{N}{T}$ reflects a higher number of requests. For a remote task $\tau_j$, its back-to-back hit is considered (\ie, $\frac{N^k_j}{T_i}$), which can occur at most once during the release of $\tau_i$. 

In addition, the condition $R_i > D_i$ at line 13 is replaced with $\Psi(\tau_i) > S_i$, where $\Psi(\tau_i)$ approximates the worst-case blocking that $\tau_i$ can incur under the current spin priority configuration and $S_i$ denotes the slack of $\tau_i$~\cite{wieder2013efficient}.
The slack is computed by $S_i = [D_i - \overline{C_i} - \sum_{\tau_h \in \text{lhp}(i)} \left \lceil \frac{T_i}{T_h} \right \rceil \cdot \overline{C_h}]_0$ with a lower bound of 0, indicating the capability of $\tau_i$ for tolerating the blocking while still meeting its deadline. 
The computation of $\Psi(\tau_i)$ is given in Eq.~\ref{eq:psi}, which approximates the blocking of $\tau_i$ based on the accessing frequency of $r^k$ that can impose a spin delay ($\widetilde{e}^k_i$), additional blocking ($\widetilde{w}^k_{i}$), and arrival blocking ($\widetilde{b}^k_i$) to $\tau_i$ within the duration of $T_i$.
\begin{equation} \label{eq:psi}
\small
\Psi(\tau_i)= (
\sum_{r^k \in \mathbb{R}}  \widetilde{e}^k_i \cdot c^k +  
\sum_{r^k \in F^w(\tau_i)} \widetilde{w}^k_{i}\cdot c^k+ 
\underset{r^k \in F^b(\tau_i)}{max} \widetilde{b}^k_i \cdot c^k) \cdot T_i
\end{equation}

Firstly, $\widetilde{e}^k_i$ is computed as $\widetilde{e}^k_i = \sum_{\lambda_m \neq A_i}\min\{\phi^k(\tau_i \cup \text{lhp}(i)), \phi^k(\Gamma_{m})\}$, which follows the same philosophy of Eq.~\ref{eq:e} but with $\phi^k(\cdot)$ applied to avoid extensive computations for response time of tasks.
Using the same approach, $\widetilde{w}^k_i$ and $\widetilde{b}^k_i$ are computed as follows.

The $\widetilde{w}^k_{i}$ is approximated by Eq.~\ref{eq:pw}, based on the frequency of re-requests to $r^k$ issued by $\tau_i \cup \text{lhp}(i)$ (see Sec.~\ref{sec:identifySource}).
As each re-request is caused by a preemption from a $\tau_h$, the frequency of the re-requests caused by $\tau_h$'s preemptions is $\frac{1}{T_h}$.
Thus, given the tasks that can preempt the spinning of $\tau_i \cup \text{lhp}(i)$ (\ie, $\Gamma_i^k=\{\tau_h | P_h > P^k_x, \tau_x \in \tau_i \cup \text{lhp}(i)\}$), the total frequency of the re-requests to $r^k$ that can delay $\tau_i$ is $\sum_{\tau_h \in \Gamma_i^k}\frac{1}{T_h}$.
Accordingly, $\widetilde{w}^k_{i}$ is approximated as the minimum value between $\sum_{\tau_h \in \Gamma_i^k}\frac{1}{T_h}$ and the remaining accessing rate of $r^k$ on each remote processor after the computation of $\widetilde{e}^k_i$.
\begin{equation} \label{eq:pw}
\small
\widetilde{w}^k_{i} = 
\sum_{\lambda_m \neq A_i}
\min\Big\{\sum_{\tau_h \in \Gamma_i^k}\frac{1}{T_h}, [\phi^k(\Gamma_{m}) - \phi^k(\tau_i \cup \text{lhp}(i)) ]_0 \Big\}
\end{equation}

Finally, $\widetilde{b}_i^k$ is computed depending on whether $r^k \in F^b(\tau_i)$ can impose a remote delay (see Eq.~\ref{eq:lb}). 
If $r^k$ only imposes a blocking of one critical section (\ie, the first case in Eq.~\ref{eq:lb}), $\widetilde{b}_i^k = \frac{1}{T_i}$ as only one request can cause the arrival blocking of $\tau_i$. 
However, if $r^k$ can impose a remote delay (\ie, the second case in Eq.~\ref{eq:lb}), $\widetilde{b}_i^k$ is computed by Eq.~\ref{eq:pb} by considering the remaining accessing rate of $r^k$, after computing $\widetilde{e}^k_i$ and $\widetilde{w}^k_{i}$ (\ie, $[\phi^k(\Gamma_m) - \phi^k(\tau_i \cup \text{lhp}(i)) - \sum_{\tau_h \in \Gamma_i^k}\frac{1}{T_h} ]_0$). 
With $\widetilde{b}_i^k$ for all $r^k \in F^b(\tau_i)$ obtained, the $r^k$ that yields the maximum $\widetilde{b}_i^k \cdot c^k$ is returned by the function $maxk(B_i)$ at line 14.
\begin{equation} \label{eq:pb}
\small
\widetilde{b}_i^k =  \frac{1}{T_i} + 
\sum_{\lambda_m \neq A_i} 
\min \Big\{
\frac{1}{T_i}, [\phi^k(\Gamma_{m}) - \phi^k(\tau_i \cup \text{lhp}(i)) - \sum_{\tau_h \in \Gamma_i^k}\frac{1}{T_h} ]_0
\Big\}
\end{equation}

\vspace{-3pt}
\subsection{Summary and Discussion}

This concludes the construction of the spin priority assignment.
We note that the use of approximations can introduce deviations~\cite{zhao2023universal}. However, as shown in Sec.~\ref{sec:evaluation}, this does not undermine the effectiveness of \pname{}. Instead, it provides a cost-effective approach with valuable guidance for the assignment, enhancing the timing performance of \pname{}.

The FRAP can be effectively implemented without introducing significant overhead. In practice, spin priorities produced by Alg.~\ref{alg:assign} can be stored in a lookup table in the user space, \eg, a hash table of ``(task, resource) $\rightarrow$ spin priority". As for the kernel space, the implementation of FRAP is similar to that of PWLP in LITMUS-RT~\cite{calandrino2006litmus, brandenburg2011scheduling, zhao2018thesis}. For a resource request (\eg, an invocation of the \texttt{lock()} function in LITMUS-RT), the additional operations required by FRAP are (i) obtain the corresponding spin priority from the lookup table and pass it as a parameter of \texttt{lock()} and (ii) update the active priority of the task based on R1 to R3 in Sec.~\ref{sec:protocol}. 
However, both operations have a linear complexity, and the overhead does not affect the effectiveness of the protocol.



In addition, the application of the proposed assignment requires detailed knowledge of the system, including $C_i$, $c^k$, and $N_i^k$ for each task and resource.
If such knowledge is not available, \eg, at an early design stage~\cite{zhao2023universal,brandenburg2022multiprocessor}, a simple heuristic can be applied as a pilot design solution, \eg, $P_i^k=\widehat{P}$ for short resources and $P_i^k=P_i$ for long ones~\cite{block2007flexible}. As more system details become available at later phases, the proposed spin priority assignment can be deployed to determine the final system configuration, with the timing verified by the constructed analysis.



\section{Evaluation} 
\label{sec:evaluation}

This section compares the performance of the proposed \pname{} with existing resource sharing solutions. The following approaches and the corresponding schedulability tests are considered as the competing methods: (i) MSRP and its analysis in \cite{wieder2013spin}; (ii) PWLP and its analysis in \cite{zhao2018thesis}; (iii) MrsP and its analysis in \cite{zhao2017new}; and (iv) the hybrid locking approach and its analysis in~\cite{afshar2014flexible,afshar2017optimal} (\ie, \textit{Hybrid}). In addition, an optimisation method is provided in~\cite{zhao2018thesis} that manages each resource by one spin-based protocol (\ie, MSRP, PWLP or MrsP). However, it focuses on the system-level configurations using optimisation techniques, hence, is not considered for comparison. 
For \pname{}, the Primal-Dual algorithm~\cite{ahuja1995network,ford1957primal} is applied as MCMF solver to compute $B_i + W_i$. The spin priorities of tasks are assigned using the approximations described in Sec.~\ref{sec:assignment-approx}.

\subsection{Experimental Setup}
Similar to that in~\cite{wieder2013spin,zhao2017new}, 
the experiments are conducted on systems with $M \in [2, 20]$ processors and $N \cdot M$ tasks with $N \in [1, 8]$, where $N$ gives the average number of tasks per processor. The periods of tasks are randomly chosen from a log-uniform distribution over $[1ms, 1000ms]$ with $D_i = T_i$. The utilisation $U_i$ of each task is generated by the UUniFast-Discard algorithm~\cite{emberson2010techniques} with a bound of $0.1 \cdot M \cdot N$. The total execution time of $\tau_i$ is $\overline{C_i} = U_i \cdot T_i$. The task priority is assigned by the deadline-monotonic priority ordering and the allocation is produced by the worst-fit heuristic. 
The system contains $K$ shared resources and each resource has a length of critical section randomly decided in a given range $L=\left [ 1\mu s,300\mu s \right ] $, covering a wide range of realistic applications~\cite{wieder2013spin,zhao2017new}. A resource sharing factor $rsf \in \left[0.1,0.5 \right]$ is used to control the number of tasks that can access resources, in which a task can require a random number (up to $K$) of shared resources. The number of accesses from a task to a resource is randomly generated from the range $\left[1, A\right]$ with $A = \left[1,30\right]$. Let $C_i^r$ denote the total resource execution time of $\tau_i$, we enforce that $C_i=\overline{C_i}-C_i^r \ge 0$ for each generated task. For the analysis of MrsP, the cost of one migration is set to $6 \mu s$, as measured by reported by~\cite{zhao2017new} under the Linux operating system.

\subsection{Overall System Schedulability}
Fig.~\ref{fig-4} presents the overall system schedulability of the evaluated protocols, in which 1000 systems are randomly generated for each system configuration.

\textbf{Observation 1:} \pname{} constantly outperforms all competing methods in terms of system schedulability.

This observation is obtained by Fig.~\ref{fig:experiment_fig_task} to~\ref{fig:experiment_fig_resource}. For the \textit{Hybrid}, it is outperformed by all other protocols due to the limitations described in Sec.~\ref{sec:related-protocol}. In particular, the overly pessimistic analytical bounds significantly undermine the effectiveness of this approach. In contrast, the proposed \pname{} addresses these limitations and achieves an improvement of $76.47\%$ on average (up to $177.52\%$) in system schedulability.
Compared to MSRP, PWLP, and MrsP, \pname{} shows a constantly higher schedulability across all scenarios with an improvement of $17.87\%$, $15.20\%$ and $32.73\%$ on average (up to $65.85\%$ compared with MrsP), respectively. In particular, the \pname{} demonstrates a strong schedulability when the scheduling pressure is relatively high, \eg, $N\geq5$ in Fig.~\ref{fig:experiment_fig_task}, $M\geq10$ in Fig.~\ref{fig:experiment_fig_core}, and $rsf \geq 0.4$ in Fig.~\ref{fig:experiment_fig_rsf}.
This justifies the effectiveness of \pname{}, which provides flexible and fine-grained resource control with improved schedulability.

In addition, as expected, 
the performance of existing protocols is highly sensitive to certain system characteristics due to the fixed resource accessing rules.
MSRP shows a strong performance with intensive resource requests (\eg, $A\geq10$ in Fig.~\ref{fig:experiment_fig_access}) or short critical sections (\eg, $L=[1\mu s,15\mu s]$ in Fig.~\ref{fig:experiment_fig_length}). 
However, its performance is significantly compromised when facing long resources, \eg, $L=[1\mu s,200\mu s]$ in Fig.~\ref{fig:experiment_fig_length}.
Besides, PWLP becomes suitable for systems with a relatively low $A$ (Fig.~\ref{fig:experiment_fig_access}) or $K$ (Fig.~\ref{fig:experiment_fig_resource}),
whereas MrsP is favourable if very long resources exist, \eg, $L=[1\mu s,300\mu s]$ in Fig.~\ref{fig:experiment_fig_length}.
By contrast, \pname{} overcomes such limitations by assigning appropriate spin priorities with task execution urgency taken into account, providing the highest schedulability under various resource accessing scenarios. 

\begin{table}[t]
\vspace{-10pt}
\caption{Percentage of schedulable system with varied $A$.}
\label{tab:experiment-access}
\vspace{-2pt}
\setlength{\tabcolsep}{1pt}
\resizebox{1\columnwidth}{!}{
    \begin{tabular}{c|cc|cc|cc|cc}
        \multirow{2}{*}{$A=$} & {\pname{}}$\&$        &!{\pname{}}      & {\pname{}}$\&$         & !{\pname{}}      & {\pname{}}$\&$         & !{\pname{}}         & {\pname{}}$\&$         & !{\pname{}}  \\
        &  {!MSRP}       &$\&${MSRP}       & {!PWLP}     &$\&${PWLP}  &  {!MrsP}     &$\&${MrsP} &  !{Hybrid}     &$\&${Hybrid} \\

        \hline
        1 & \textbf{20.67} & 0.00 & \textbf{4.38} & 0.17 & \textbf{14.43} & 0.04 & \textbf{20.76} & 0.45 \\
        5 & \textbf{12.42} & 0.06 & \textbf{11.68} & 0.02 & \textbf{19.93} & 0.10 & \textbf{37.99} & 0.17 \\
        10 & \textbf{7.32} & 0.12 & \textbf{12.72} & 0.05 & \textbf{20.02} & 0.04 & \textbf{33.82} & 0.07\\
        20 & \textbf{3.11} & 0.16 & \textbf{10.09} & 0.01 & \textbf{15.27} & 0.04 & \textbf{21.51} & 0.05\\
        30 & \textbf{1.62} & 0.16 & \textbf{7.87} & 0.01 & \textbf{12.09} & 0.01 & \textbf{15.30} & 0.01\\
    \end{tabular}
}
\end{table}

\begin{table}[t]
\caption{Percentage of schedulable system with varied $L$.}
\label{tab:experiment-csl}
\vspace{-2pt}
\setlength{\tabcolsep}{1pt}
\resizebox{1\columnwidth}{!}{
\begin{tabular}{c|cc|cc|cc|cc}
\multirow{2}{*}{$L=$} & {\pname{}}$\&$        &!{\pname{}}      & {\pname{}}$\&$         & !{\pname{}}      & {\pname{}}$\&$         & !{\pname{}}         & {\pname{}}$\&$         & !{\pname{}}  \\
        &  !{MSRP}       &$\&${MSRP}       & !{PWLP}     &$\&${PWLP}  &  !{MrsP}     &$\&${MrsP} &  !{Hybrid}     &$\&${Hybrid} \\

\hline
1-15              & \textbf{0.88}          & 0.01           & \textbf{0.22}              & 0.02             & \textbf{25.75}             & 0.00    & \textbf{12.99}  & 0.06\\
1-50              & \textbf{3.27}          & 0.00           & \textbf{4.86}              & 0.05             & \textbf{22.40}             & 0.00    & \textbf{34.32}  & 0.12\\
1-100             & \textbf{12.42}         & 0.06           & \textbf{11.68}              & 0.02             & \textbf{19.93}             & 0.10    & \textbf{37.99} & 0.17\\
1-200             & \textbf{25.42}         & 0.02           & \textbf{13.18}              & 0.03             & \textbf{11.64}             & 0.91    & \textbf{24.71}  & 0.22\\
1-300             & \textbf{15.67}        & 0.03           & \textbf{9.18}              & 0.08             & \textbf{6.36}              & 2.03  & \textbf{14.11} & 0.22\\
\end{tabular}}
\vspace{-10pt}
\end{table}

\subsection{The Effectiveness of Proposed Protocol}
The above shows the overall schedulability of \pname{}. This section further investigates the performance of \pname{} and existing protocols by examining the percentage of systems that are feasible under protocol X but are not schedulable with protocol Y in 10,000 systems, represented as X$\&$!Y.

\textbf{Observation 2:}
\pname{} can schedule $10.28\%$, $8.59\%$, $16.78\%$ and $25.35\%$ systems on average that are infeasible under MSRP, PWLP, MrsP and \textit{Hybrid}, respectively.

This is observed in Tab.~\ref{tab:experiment-access} and Tab.~\ref{tab:experiment-csl}. As shown by the results, \pname{} in general shows a dominating performance compared to existing protocols, where it can schedule a large percentage of systems that are infeasible with existing protocols. 
For instance, with $A=5$ in Tab.~\ref{tab:experiment-access}, over $11.68\%$ of systems are feasible with \pname{} but are unschedulable using an existing protocol, whereas at most $0.17\%$ of systems are schedulable under existing protocols while \pname{} cannot.
However, MrsP shows a relatively strong performance with $L=\left [ 1\mu s,300\mu s \right ] $ in Tab.~\ref{tab:experiment-csl}, which can schedule $2.03\%$ systems that are infeasible under \pname{}. 
The reason is with MrsP applied, the independent high-priority tasks are free from the arrival blocking due to the ceiling priority mechanism (see Sec.~\ref{sec:related-protocol}). 
However, such tasks under \pname{} can incur a local blocking as resources are executed non-preemptively. 
In addition, although the approximations in Sec.~\ref{sec:assignment-approx} can introduce deviations,
such an impact is trivial as suggested by the results. This validates the effectiveness of the proposed implementation of the spin priority assignment.

\subsection{Computation Cost of Proposed Analysis}
This section compares the computation cost of the proposed analysis, the holistic analysis in~\cite{zhao2018thesis} and the ILP-based analysis in~\cite{wieder2013spin} under MSRP and PWLP.
The holistic and ILP-based methods directly support the analysis of MSRP and PWLP~\cite{wieder2013spin,zhao2018thesis}.
In addition, the proposed MCMF-based analysis is capable of handling both MSRP and PWLP with the corresponding spin priority assigned, \ie, $\widehat{P}$ for MSRP and $P_i$ for PWLP.
The measurements are collected on an Intel i5-13400 processor with a frequency of 2.50GHz.

\textbf{Observation 3:} The cost of the proposed analysis is reduced by 12.18x on average compared to ILP-based analysis.

This is observed in Tab.~\ref{tab:cost}. 
With an increase in $N$ and $M$, the cost of all analysing methods grows due to the increasing system complexity. However, a decrease is observed for all methods when the scheduling pressure is high (\eg, $M=16$), where infeasible tasks are more likely to be found that directly terminate the analysis. 
As expected, the holistic analysis shows the lowest cost without any optimisation techniques, whereas the ILP-based one has the highest computation time.
As for the constructed analysis, it shows a much lower cost compared to the ILP-based analysis in all cases. This justifies the use of MCMF in the proposed analysis, which avoids the high computational demand and mitigates the scalability issue in ILP, while producing tight analytical results.



\begin{table}[t]
\vspace{-10pt}
    \caption{Cost of analysis (in $ms$) with varied $N$ and $M$.}
    \label{tab:cost}
    \vspace{-2pt}
    \centering
    \setlength{\tabcolsep}{8pt}
    \resizebox{\columnwidth}{!}{
        \begin{tabular}{c|cc|cc|cc}

        \multirow{2}{*}{$N=$} & \multicolumn{2}{c|}{{Holistic~\cite{zhao2018thesis}}} & \multicolumn{2}{c|}{ILP-based~\cite{wieder2013spin}}  & \multicolumn{2}{c}{Proposed}   \\
            \cline{2-7}
            &\raisebox{-0.3ex}[0pt][0pt]{\textit{MSRP}}  & \raisebox{-0.3ex}[0pt][0pt]{\textit{PWLP}}  & \raisebox{-0.3ex}[0pt][0pt]{\textit{MSRP}}  & \raisebox{-0.3ex}[0pt][0pt]{\textit{PWLP}}  & \raisebox{-0.3ex}[0pt][0pt]{\textit{MSRP}}  & \raisebox{-0.3ex}[0pt][0pt]{\textit{PWLP}} \\
            \hline
           
            1 & 0.05 & 0.03 & 3.36 & 3.46 & \textbf{0.15} & \textbf{0.10} \\
            3 & 0.56 & 0.33 & 75.62 & 70.76 & \textbf{2.10} & \textbf{6.54}  \\
            5 & 1.33 & 0.80 & 241.84 & 218.71 & \textbf{5.01} & \textbf{43.17}  \\
            7 & 0.73 & 0.39 & 259.97 & 252.53 & \textbf{2.68} & \textbf{14.61}  \\
            \hline
        \multirow{2}{*}{$M=$} & \multicolumn{2}{c|}{{Holistic~\cite{zhao2018thesis}}} & \multicolumn{2}{c|}{ILP-based~\cite{wieder2013spin}}  & \multicolumn{2}{c}{Proposed}   \\
            \cline{2-7}
            &\raisebox{-0.3ex}[0pt][0pt]{\textit{MSRP}}  & \raisebox{-0.3ex}[0pt][0pt]{\textit{PWLP}}  & \raisebox{-0.3ex}[0pt][0pt]{\textit{MSRP}}  & \raisebox{-0.3ex}[0pt][0pt]{\textit{PWLP}}  & \raisebox{-0.3ex}[0pt][0pt]{\textit{MSRP}}  & \raisebox{-0.3ex}[0pt][0pt]{\textit{PWLP}} \\
            \hline
 
            4 & 0.11 & 0.10 & 19.37 & 17.52 & \textbf{0.71} & \textbf{7.95}  \\
            8 & 0.76 & 0.44 & 128.30 & 113.22 & \textbf{3.24} & \textbf{32.82}  \\
            12 & 1.33 & 0.80 & 242.41 & 219.49 & \textbf{5.14} & \textbf{44.17}  \\ 
            16 & 0.61 & 0.55 & 229.79 & 225.40 & \textbf{2.05} & \textbf{20.21}  \\
           
        \end{tabular}
    }
    \vspace{-10pt}
\end{table}

\subsection{Summary}
In summary, \pname{} outperforms existing resource sharing solutions with a dominating performance in terms of schedulability, and requires much less computation cost for the timing analysis.
With \pname{} constructed, we provide a flexible and fine-grained resource sharing solution for FP-FPS systems, in which the effectiveness holds for a wide range of resource accessing scenarios, addressing the limitations of existing protocols with improved performance. 

\section{Conclusion} 
\label{sec:conclusion}

In the Buddhism philosophy, one universal suffering is ``not getting what one needs'', which is valid for computing systems.
This paper presents \pname{}, a flexible resource accessing protocol for FP-FPS systems with FIFO spin locks. Instead of rigid resource control, \pname{} enables flexible spinning in which a task can spin at any priority level in a range for accessing a resource.
A novel MCMF-based analysis is constructed that produces tight blocking bounds for \pname{}.
In addition, a spin priority assignment is deployed that exploits flexible spinning to enhance the performance of \pname{}.
Experimental results show that \pname{} outperforms existing resource sharing solutions with spin locks in schedulability, with less computation cost needed by its analysis.
Future work will extend \pname{} to support more complex scenarios, \eg, nested resource accesses with non-uniform worst-case computation time~\cite{ward2013fine}. 


\balance
\bibliographystyle{IEEEtran}
\bibliography{ref}

\end{document}